\def\BState{\State\hskip-\ALG@thistlm}
\theoremstyle{definition}
\theoremstyle{remark}
\numberwithin{equation}{section}
\newcommand{\I}{I_{3\times 3}} 
\DeclareMathOperator*{\minimize}{minimize}
\DeclareMathOperator*{\maximize}{maximize}
\DeclareMathOperator{\trace}{tr}
\DeclareMathOperator{\sgn}{sgn}
\DeclareMathOperator{\e}{e}
\DeclareMathOperator{\logm}{logm}
\newcommand{\Orntcnst}{\mathcal{C}_{\text{ornt}}}
\newcommand{\Mtmcnst}{\mathcal{C}_{\text{mtm}}}
\newcommand{\cltb}{c} 
\newcommand{\mtmb}{d} 
\newcommand{\inertia}{J} 
\newcommand{\mint}{J_d} 
\newcommand{\R}{\mathds{R}} 
\newcommand{\defas}{:=}
\newcommand{\nsp}{\kern-1em}
\newcommand{\iso}{\ \raisebox{0.25ex}{\ensuremath{\stackrel{\sim}{\rule{8pt}{0.35pt}}}\ \,}}
\newcommand{\norm}[1]{\left\lVert{#1}\right\rVert} 
\newcommand{\abs}[1]{\left\lvert{#1}\right\rvert} 
\newcommand{\so}[1]{\mathfrak{so}(#1)} 
\newcommand{\SO}[1]{\text{SO}(#1)} 
\newcommand{\secref}[1]{\S\ref{#1}} 
\newcommand{\ip}[2]{\left\langle#1,#2\right\rangle} 
\newcommand{\mtm}[1]{\Pi_{#1}} 
\newcommand{\pmtm}[1]{\tilde{\Pi}_{#1}} 
\newcommand{\pmcom}[1]{\tilde{\gamma}_{#1}} 
\newcommand{\rot}[1]{F_{#1}} 
\newcommand{\ornt}[1]{R_{#1}} 
\newcommand{\m}[1]{\( #1 \)}  
\newcommand{\cltr}[1]{u_{#1}} 
\newcommand{\ocltr}[1]{\mathring{u}_{#1}} 
\newcommand{\rvec}[1]{\begin{pmatrix} #1^1 & #1^2 & #1^3\end{pmatrix}} 
\newcommand{\com}[1]{\lambda_{#1}} 
\newcommand{\cor}[1]{\chi_{#1}} 
\newcommand{\mcom}[1]{\gamma_{#1}} 
\newcommand{\mcor}[1]{\zeta_{#1}}   
\newcommand{\clts}[1]{\alpha_{#1}} 
\newcommand{\mtms}[1]{\beta_{#1}}  
\newcommand{\tr}[1]{\left(\trace\left(\rot{#1}\right)\I-\rot{#1}\right)} 
\newcommand{\mtr}[1]{\left(\trace\left(\mint \rot{#1}^\top\right)\I-\mint \rot{#1}^\top\right)} 
\newcommand{\mtrt}[1]{\left(\trace\left(\rot{#1}\mint\right)\I-\rot{#1}\mint\right)} 
\newcommand{\ii}[1]{\mathrm{I}^{#1}} 
\newcommand{\der}[1]{\mathcal{D}_{#1}} 
\newcommand{\drot}[1]{\der{\mtm{#1}^i}\rot{#1}} 
\begin{document}

\title[Attitude control with state and control constraints]{Discrete-time optimal attitude control of spacecraft with momentum and control constraints}
\author{Karmvir Singh Phogat, Debasish Chatterjee, Ravi N.~Banavar}
\address{Indian Institute of Technology Bombay, Mumbai, India.}
\begin{abstract}
	This article solves an optimal control problem arising in attitude control of a spacecraft under state and control constraints. We first derive the discrete-time attitude dynamics by employing discrete mechanics. The orientation transfer, with initial and final values of the orientation and momentum and the time duration being specified, is posed as an energy optimal control problem in discrete-time subject to momentum and control constraints. Using variational analysis directly on the Lie group \m{\SO{3}\,\!,} we derive first order necessary conditions for optimality that leads to a constrained two point boundary value problem. This two point boundary value problem is solved via a novel multiple shooting technique that employs a root finding Newton algorithm. Robustness of the multiple shooting technique is demonstrated through a few representative numerical experiments. 
\end{abstract}


\keywords{Constrained optimal control, Variational methods, multiple shooting; (spacecraft) attitude control; state constraints and discrete mechanics.}

\maketitle


\section{Introduction}\label{s:intro}
Typical space applications require reorienting a spacecraft or a satellite from a given initial configuration to a given final configuration over a given duration of time. Examples of such manoeuvres include positioning star sensors (attitude estimation sensors) towards deep space, pointing a camera in the desired direction for imaging purposes, positioning solar panels for effective tracking of the sun for optimal energy harvesting, etc~\cite{scrivener}. Such orientation maneuvers are popularly known as attitude maneuvers, and the class of attitude maneuvers that optimize a certain performance objective are termed optimal attitude maneuvers. In most applications, attitude control problems are subjected to additional constraints due to actuator saturation and/or  forbidden regions of the state space. Control constraints are omnipresent --- for instance, a momentum delivering device fitted on-board a spacecraft, e.g., a thruster, or a reaction wheel, has limitations in terms of maximum torque produced and maximum momentum delivered. In case a spacecraft is fitted with a flexible structure like solar panels, it must inevitably have a restriction of its maximum momentum for the protection of such equipments; such restrictions translate to state constraints in an optimal control problem. The class of constrained control problems involving orientation manoeuvres provides a rich supply of interesting and complex control problems related to space applications. Few computationally tractable solutions to such problems are known today, and in this article we propose a new and promising technique to solve a class of such constrained optimal control problems.
    
    In general, computationally tractable solutions to optimal control problems subject to nonlinear dynamics are difficult to arrive at \cite{trelat}. Constrained optimal attitude control problems are vastly more challenging than standard optimal control problems on Euclidean spaces because (a) the attitude kinematic equations evolve on the manifold \m{\SO{3}}, and classical techniques developed specifically for dynamics on Euclidean spaces, therefore, no longer directly apply, and (b) the presence of state and control constraints in the optimal control problem typically leads to two point boundary value problems (which represents first order necessary optimality conditions,) subject to inequality constraints, and these problems cannot be solved using conventional indirect techniques such as shooting methods \cite{trelat}. In this article we provide an algorithmic solution to one such class of constrained optimal attitude control problems subject to state and control constraints, where the attitude kinematics in discrete time are treated directly on the Lie group \m{\SO{3}} and are derived using discrete mechanics \cite{dm_marsden}.     
    
    Constrained optimal attitude control problems treated in the literature typically fall into one of two categories: one that considers only control constraints and the other that considers state constraints in addition to control constraints. Attitude control problems with control constraints have been discussed under the framework of optimal control with various performance indices involving minimization of time, fuel, and energy. A considerable body of work on the time-optimal attitude control problems exist~\cite{scrivener,ross, tsiotras, fleming2010}. Fuel or energy optimal maneuvers are also of great interest since on-board energy sources are limited and precious, and there is therefore a natural necessity  for using them in an optimal way~\cite{sagariss}. Early works on energy optimal control \cite{ athans,ilya} and fuel optimal control \cite{dixon,donghoon} tackled the continuous-time optimal attitude control problem, arriving at first order necessary conditions for optimality via the Pontryagin maximum principle. The resulting boundary value problems were typically solved using shooting or neighborhood extremal methods, both of which suffer from high sensitivity to initial data.       
	In recent years, attitude control problems with joint state and control constraints have been attacked from various directions. Attitude manoeuvres with joint state and control constraints, considering a linearized dynamics model is addressed in \cite{diehl, kolmanovskyfixed}. In these works the problem is recast as a nonlinear optimization problem, and solved using direct multiple shooting methods or sequential quadratic programming \cite{boyd}. Constrained attitude control problems with attitude kinematics represented in quaternion form  was explored in \cite{rossim,ulee} using pseudospectral methods, and in \cite{zhuang} using particle swarm optimization. Representation of the attitude kinematics in quaternion form has a serious disadvantage of non-uniqueness; consequently, boundary conditions defined for attitude kinematics do not have a unique representation in quaternions, which presents a problem during computations. Another commonly used attitude representation is Euler angles, which suffers from the defect of singularity \cite{schaub}. In order to avoid issues such as singularity and non-uniqueness, geometric techniques have been developed to solve constrained attitude control problems \cite{dm_marsden, bloch2009, leok3d}. One such approach to attitude control problems in continuous time with control constraints is discussed in \cite{saccon}. Projection operators were used there to find a search direction on the underlying Lie group, while employing Newton's method to solve the continuous time two point boundary value problem obtained via the Pontryagin maximum principle on manifolds \cite[p.\ 165]{agrachev}. A different geometric technique to attitude control problem with state and control constraints is addressed in \cite{rohit} where the authors attempted to handle state inequality constraints using penalty functions. As is well known, penalty functions do not \emph{enforce} the state inequality constraints, but add penalties to the cost functions if the constraints are violated. This constrained optimal control problem leads to a  two point boundary value problem that represents first order necessary optimality condition. The boundary value problem is then solved by employing indirect single shooting method. None of the these works discussed the issue of digital implementation, a key aspect of which is the process of discretization of the boundary value problems in time: Euler's step and its derivatives are insufficient since these do not account for nor respect the underlying manifold structure of \m{\SO{3}}.
   
 	In this article we tackle the issue of discretization up front by deriving a discrete time model via discrete mechanics~\cite{ leok3d, kobilarov}, and then employing variational analysis to arrive at first order necessary conditions for optimality. Discrete time models obtained via discrete mechanics are more accurate than other standard discretization schemes such as Euler's step because they preserve certain invariance properties like kinetic energy, momentum, etc, of the system, and the computations can be done  directly on the manifold \m{\SO{3}}, (because this discretization respects the manifold \m{\SO{3}},) thereby eliminating the problems associated with parametric representations of attitude. The presence of state inequality constraints in our attitude control problem makes it challenging because the resulting boundary value problems obtained using variational analysis are subject to inequality constraints on the states. Such constrained boundary value problems cannot in general be solved using classical multiple shooting methods precisely because of those inequality constraints. A non-classical multiple shooting technique has recently been proposed in~\cite{gerdts} for solving constrained boundary value problems arising in optimal control problems with state constraints. There the complementary slackness conditions arising due to inequality constraints on the states are represented in the form of equality constraints using the Fischer-Burmeister function~\cite[Equation (2.7)]{gerdts} --- a technique that can be highly inefficient in terms of computation because all the inequality constraints are considered at each iteration irrespective of them being active or not. In contrast, we propose a multiple shooting algorithm that deals with state inequality constraints in such a way that the dimension of the boundary value problem remains the same even when the state inequality constraints are active, making it more efficient in terms of time and memory complexity.   
    
	We reiterate that the thrust of our contribution is towards computational tractability of optimal attitude manoeuvres. The advantages of our technique are five fold: First, the discrete time model derived via discrete mechanics avoids the need for discretization at later stages, and eliminates issues associated with parametric representations of the attitude kinematic. Second, an indirect multiple shooting method, which provides more accurate solution to the problem unlike direct techniques, is employed to solve the optimal control problem. Integration steps of the discrete kinematic equations represented by rotation matrices are calculated at each discrete instant, which provide exact attitude trajectories for manoeuvres unlike other standard schemes \cite{trelat}. Third, the multiple shooting method that we employ here is more robust to initial guesses as compared to indirect single shooting methods that are comparable in terms of accuracy \cite{trelat}. Fourth, the proposed algorithm, being a multiple shooting method, can be implemented on a parallel architecture for fast computation. Fifth, the discrete time model obtained using discrete mechanics results in a boundary value problem that can be reduced to difference equations with momentum and co-momentum dynamics only. The dimension of this reduced difference equation model is half of the original one, which further contributes to savings in terms of time and memory. Moreover, scaling of the co-state variables such that the co-states are invariant to step length selected for a given manoeuvre, improves the radius of convergence of the multiple shooting algorithm. 
    
    This article unfolds as follows: In \secref{s:dis_mod} we employ discrete mechanics to derive a discrete time model of the attitude dynamics. In \secref{s:ocdtad} the energy optimal control problem is posed as a discrete time optimal control problem, and first order necessary conditions are obtained using variational analysis. Then we discuss scaling of the variables and reduction of the dynamics to momentum and co-momentum variables in \secref{s:smr}. \secref{s:msm} contains an introduction to multiple shooting methods, and provides a solution to the system of difference equations represented in momentum variables presented in \secref{s:smr}. \secref{s:nsr} provides the numerical experiments for large angle manoeuvres with momentum and control constraints. We conclude in \secref{s:cfd} with a brief discussion of future work. The proofs of our results are presented in a consolidated fashion in the Appendices.

    \section{Derivation of Discrete time model via discrete mechanics}\label{s:dis_mod}
    This section contains the modeling of the attitude dynamics of the spacecraft using discrete mechanics \cite{dm_marsden}. In most optimal control problems involving mechanical systems, some sort of discretization is performed in order to employ numerical techniques. In the case of discrete mechanics, the variational description is directly discretized, and discrete time equations are obtained. This approach is advantageous in comparison to usual discretizations of continuous time models because it preserves certain invariants of the system such as momentum and energy.

    For ease of understanding, a quick introduction to discrete mechanics is given here, followed by discrete time modeling of the attitude dynamics of the spacecraft.
    \subsection{Introduction}\label{ss:dis_intro}
     Consider a mechanical system with the configuration space \m{Q} as a smooth manifold. Then the velocity vectors lie on the tangent bundle \m{TQ} of the manifold \m{Q} and the Lagrangian for the system can be defined as \m{L:TQ \rightarrow \R}~\cite{marsden}. In discrete mechanics, the velocity phase space \m{TQ} is replaced by \m{Q\times Q} which is locally isomorphic to \m{TQ}. Let us consider an integral curve \m{q(t)} in the configuration space such that \m{q(0)=q_0} and \m{q(h)=q_1}, where \m{h} represents the integration step. Then, the discrete Lagrangian \m{L_d:Q\times Q \rightarrow \R,} which is an approximation of the action integral along the integral curve segment between \m{q_0} and \m{q_1,} can be defined as~\cite{dm_ober} 
\begin{align}\label{eq:dis_lag}
L_d\left(q_0,q_1\right) \approx \int_{0}^{h}L\left(q(t),\dot{q}(t)\right) dt. 
\end{align}
Pick \m{h > 0}, (this quantity plays the role of step length) and consider a grid of the time domain \m{T=Nh} as \m{\left\{ t_k = kh | k=0,1, \ldots, N\right\}} and the corresponding discrete path space \m{\mathcal{P}_d(Q)\defas \left\{q_d :\{t_k\}_{k=0}^{N} \rightarrow Q \right\}}. The discrete trajectory \m{q_d \in \mathcal{P}_d(Q)} is such that \m{q_d(t_k)=q_k}. Now, defining the discrete action sum \m{\mathfrak{G}_d} as
\[ \mathfrak{G}_d(q_d) \defas  \sum_{k=0}^{N-1} L_d\left(q_k,q_{k+1}\right).\]
Assuming \m{q_d(0)=q_0} and \m{q_d(t_N)=q_N} fixed, define the variations \m{\delta(q_d)} as \m{\delta(q_d(t_k))=\delta(q_k)\in T_{q_k}Q} which vanishes at the end points \m{\left(\text{i.e.,\;} \delta(q_0)=\delta(q_N)=0\right)}. A discrete path \m{q_d} is a stationary point of the discrete action sum \m{\mathfrak{G}_d} if \m{\der{q_d} \mathfrak{G}_d(q_d)\delta(q_d) =0} for all \m{\delta(q_d)}~\cite[p.\ 21]{holm}.  
This is equivalent to saying that the points \m{\{q_k\}} of the path \m{q_d} satisfies the discrete Euler-Lagrange equations, i.e.,
\begin{align}\label{eq:lagrangian_eqn}
D_2 L_d\left(q_{k-1},q_k\right) + D_1 L_d\left(q_k,q_{k+1}\right)=0 \quad \text{\;for all\;} k=1,2,\ldots,N-1,
\end{align}
where \m{D_i} is the derivative of the function with respect to the \m{i}th argument. \\
\indent    Notice that \eqref{eq:lagrangian_eqn} involves \m{q_{k-1},q_k \text{\;and\;}q_{k+1}} at \m{k}th instant of time, which means that the difference equations obtained are of second order. To arrive at the discrete time model with first order difference equations, one needs to use the discrete time analogue of the Hamiltonian formulation.  
To this end, recall that the continuous time Legendre transform is a map \m{\mathbb{F}L} from the Lagrangian state space \m{TQ} to the Hamiltonian phase space \m{T^{*}Q}. Similarly, the discrete time Legendre transforms \m{\mathbb{F}^{+}L_d,\mathbb{F}^{-}L_d: Q\times Q \rightarrow T^{*}Q} ~\cite{dm_marsden} can be defined as
\begin{align*}
&\mathbb{F}^{+}L_d\left(q_k,q_{k+1}\right) \mapsto (q_{k+1},p_{k+1}) = \left(q_k, D_{2} L_d\left(q_k,q_{k+1}\right)\right),\\
&\mathbb{F}^{-}L_d\left(q_k,q_{k+1}\right) \mapsto (q_{k},p_{k}) = \left(q_k, -D_{1} L_d\left(q_k,q_{k+1}\right)\right),
\end{align*}
which are maps from the discrete Lagrangian state space \m{Q \times Q} to the discrete Hamiltonian phase space \m{T^* Q}. The map \m{\mathbb{F}^{+}L_d} is the forward discrete Legendre transform which relates \m{\left(q_k,q_{k+1}\right)} to \m{T^{*}_{q_{k+1}}Q}, and the map \m{\mathbb{F}^{-}L_d} is the backward discrete Legendre transform which relates \m{\left(q_k,q_{k+1}\right)} to \m{T^{*}_{q_{k}}Q}.
Let the discrete Lagrangian map \m{F_{L_d}:Q\times Q \rightarrow Q \times Q} be defined as \m{F_{L_d}\left(q_k,q_{k+1}\right) = \left(q_{k+1},q_{k+2}\right)}; it defines the evolution of the dynamics on the discrete state space. Then the corresponding discrete Hamiltonian map \m{\tilde{F}_{L_d}:T^{*}Q \rightarrow T^{*}Q} can be defined in the following equivalent ways ~\cite{dm_marsden}:
\begin{align*}
\tilde{F}_{L_d}\defas \mathbb{F}^{\pm}L_d \circ F_{L_d} \circ \left(\mathbb{F}^{\pm}L_d \right)^{-1} = \mathbb{F}^{+}L_d \circ \left(\mathbb{F}^{-}L_d \right)^{-1};
\end{align*}  
this is clear from the commuting diagram shown in Figure \ref{fig:comdiag}.
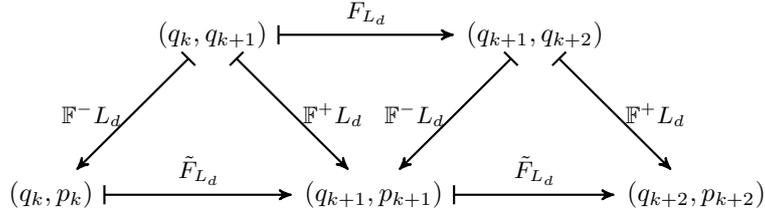
\begin{figure}[ht!]
\centering
\begin{tikzpicture}[|->,>=stealth',shorten >=1pt,auto,node distance=3cm,
  thick,main node/.style={font=\sffamily\bfseries}]
  \node[main node] (1) {$\left(q_k,q_{k+1}\right)$};
  \node[main node] (4) [below right of=1] {$\left(q_{k+1},p_{k+1}\right)$};
  \node[main node] (2) [above right of=4] {$\left(q_{k+1},q_{k+2}\right)$};
  \node[main node] (3) [below left of=1] {$\left(q_k,p_{k}\right)$};
  
  \node[main node] (5) [below right of=2] {$\left(q_{k+2},p_{k+2}\right)$};
  \path[every node/.style={font=\sffamily\small}]
    (1) edge node [left] {$\mathbb{F}^{-}L_d$} (3)
        edge node[right] {$\mathbb{F}^{+}L_d$} (4)
        edge node[above] {${F}_{L_d}$} (2)
    (2) edge node [left] {$\mathbb{F}^{-}L_d$} (4)
        edge node[right] {$\mathbb{F}^{+}L_d$} (5)
    (3) edge node [above] {$\tilde{F}_{L_d}$} (4)
    (4) edge node [above] {$\tilde{F}_{L_d}$} (5);        
    
\end{tikzpicture}
\caption{Flow of the discrete Lagrangian and Hamiltonian map}
\label{fig:comdiag}
\end{figure}
The discrete Hamiltonian map is defined in coordinates as follows:
\begin{align}\label{eq:dis_ham}
\left(q_k,p_k\right) \mapsto \tilde{F}_{L_d}\left(q_k,p_k\right) \defas \left(q_{k+1},p_{k+1} \right) \text{\;\; where \;\;}  \begin{cases} p_k=-D_{1} L_d\left(q_k,q_{k+1}\right),\\p_{k+1}= D_{2} L_d\left(q_k,q_{k+1} \right).\end{cases}
\end{align}

    \subsection{Attitude dynamics in discrete time}    
    We now apply the ideas introduced in \secref{ss:dis_intro} to obtain the discrete equations of the attitude dynamics of a spacecraft. First, the Lagrangian in continuous time is described, and then an approximation of the continuous time Lagrangian is taken to define the discrete Lagrangian \eqref{eq:dis_lag}. Thereafter, the discrete time attitude dynamics is obtained using discrete Hamiltonian formulation \eqref{eq:dis_ham}.

    Consider a rigid body with a point, typically chosen to be center of mass, fixed on it. In order to define the orientation of a rigid body, two coordinate systems are considered with the origin at that fixed point. One frame fixed to the rigid body is known as the body frame, and the other is a frame fixed in space, known as the spatial frame. Let \m{X} be the position of the mass element in the body frame. Then the position of the mass element in reference frame \m{x} is related to the body frame coordinates \m{X} by the rotation matrix \m{R(t) \in \SO{3}} as \m{x(t)=R(t)X}. Let \m{\mathcal{B}} be the region occupied by the body in its reference frame. Let \m{\rho(X)} be the density of the rigid body in the body coordinates at point \m{X}. Then the kinetic energy of the rigid body is ~\cite[p.\ 243]{holm}:
\[ K=\frac{1}{2}\int_{\mathcal{B}} \rho(X) \norm{\dot{x}}^2 d^3 X, \]
which can be rewritten, in view of the left-invariance of the kinetic energy \cite[p.\ 275]{bullo}, as  
\[ K= \frac{1}{2}\int_{\mathcal{B}} \rho(X) \norm{\dot{R}X}^2 d^3 X = \frac{1}{2}\int_{\mathcal{B}} \rho(X) \norm{R^{-1}\dot{R}X}^2 d^3 X.  \]
We know that the spatial angular velocity vector \m{\Omega} can be represented in terms of the body angular velocity vector \m{\omega} as \m{\Omega = R^{-1}\omega.} Then 
\begin{align}\label{eq:rot_mtm}
\Omega \times X= R^{-1}\omega \times R^{-1}x = R^{-1} \left(\omega \times x \right) =  R^{-1}\dot{x}=R^{-1}\dot{R}X.
\end{align}
Let \m{\R^3 \ni \Omega \mapsto \widehat{\Omega} \in \so{3}} be a vector space homeomorphism. Then, from \eqref{eq:rot_mtm} we arrive at the kinematic relation \m{\widehat{\Omega}=R^{-1}\dot{R} \in \so{3},} the Lie algebra of \m{\SO{3}}. So, the kinetic energy can be represented in terms of the spatial frame angular velocity as 
\[ K = \frac{1}{2}\int_{\mathcal{B}} \rho(X)\trace\left((\widehat{\Omega}X)(\widehat{\Omega}X)^\top \right) d^3 X =\frac{1}{2} \trace\left(\widehat{\Omega}\mint \widehat{\Omega}^\top\right), \]
 where \[\mint= \frac{1}{2}\int_{\mathcal{B}} \rho(X) X X^\top d^3 X. \]

    The body moment of inertia matrix \m{J \defas \frac{1}{2}\int_{\mathcal{B}} \rho(X) \widehat{X}^\top \widehat{X} d^3 X } is related to \m{\mint} by the following equation~\cite{taeyoung3d}: 
    \[J=\trace\left(\mint\right)\I - \mint.\] 
If the dissipative and potential forces are absent, then the Lagrangian \m{L:T\SO{3} \rightarrow \R} for the system is given by~\cite[p.\ 245]{holm}
\begin{align*}
L(R,\Omega) \defas K = \frac{1}{2} \trace\left(\widehat{\Omega}\mint \widehat{\Omega}^\top\right). 
\end{align*}
By the kinematic relation \m{\dot{R} = R \widehat{\Omega}} given above, we know that \m{\widehat{\Omega} = R^\top\dot{R}}. So, the Lagrangian can be written as
\begin{align}\label{eq:att_lag}
L(R,\dot{R}) = \frac{1}{2} \trace\left(R^\top\dot{R} \mint \dot{R}^\top R\right).
\end{align}
We now proceed to discretize the Lagrangian \eqref{eq:att_lag}. Considering discrete time instants \m{t_k = kh \text{\;\;for\;\;} k = 0,1,\ldots,} such that \m{R(t_k)=\ornt{k}} and the approximation \m{\dot{R}(t_k) \approx \frac{(\ornt{k+1}-\ornt{k})}{h} \text{\;\;for\;\;} t \in [t_k, t_{k+1}] }~\cite{taeyoung3d}, the discrete Lagrangian \m{L_d : \SO{3}\times \SO{3} \rightarrow \R} is defined as:
\begin{align}
L_d(\ornt{k},\ornt{k+1}) &\approx h L\left(\ornt{k}, \frac{(\ornt{k+1}-\ornt{k})}{h}\right)\nonumber \\
&= \frac{h}{2} \trace \left(\frac{\ornt{k}^\top (\ornt{k+1}-\ornt{k})}{h}\mint \frac{(\ornt{k+1}-\ornt{k})^\top \ornt{k}}{h}\right)\nonumber\\
&= \frac{1}{2h} \trace \left((\ornt{k}^\top \ornt{k+1}-\I)\mint (\ornt{k}^\top \ornt{k+1}-\I)\right)\nonumber\\
&= \frac{1}{h} \trace \left(\left(\I - \rot{k}\right)\mint \right),
\end{align}
where \m{\rot{k} = \ornt{k}^\top \ornt{k+1}}. Note that under the discretization technique employed, the discrete Lagrangian, like its continuous counterpart, is invariant under the action of the SO(3) group. This property will be useful later when momentum equations will be derived and the rotation sequence constructed based on the momentum history.

    Our objective is to come up with first order difference equations describing the attitude dynamics of the spacecraft. By the left trivialization of the cotangent bundle of a Lie group, \m{T^*\SO{3}}  can be represented as \m{\SO{3}\times \so{3}^*}, where \m{\so{3}^*} denotes the dual of the Lie algebra \m{\so{3}}~\cite[p.\ 254]{agrachev}. We now proceed to find the discrete time Hamiltonian map \eqref{eq:dis_ham} \m{\tilde{F}_{L_d}:\SO{3}\times \so{3}^* \rightarrow \SO{3}\times \so{3}^*} such that  
\[\tilde{F}_{L_d}\left(\ornt{k},\widehat{\Pi}_{k}\right) = \left(\ornt{k+1},\widehat{\Pi}_{k+1}\right), \text{\; where\;} \begin{cases} \widehat{\Pi}_{k}=-D_{1} L_d\left(\ornt{k},\ornt{k+1}\right),\\ \widehat{\Pi}_{k+1}= D_{2} L_d\left(\ornt{k},\ornt{k+1} \right), \\ \mtm{k} \in \R^3.\end{cases} \]
In order to find \m{\widehat{\Pi}_{k}} and \m{\widehat{\Pi}_{k+1}},  the variations in \m{\ornt{k}} are defined in terms of \m{\widehat{\eta}_k \in \so{3}}, and the expressions \m{D_{1} L_d\left(\ornt{k},\ornt{k+1}\right)} and \m{D_{2} L_d\left(\ornt{k},\ornt{k+1} \right)} are evaluated using the duality product on \m{\so{3}}~\cite[p.\ 290]{marsden}.\\
\indent For a given \m{\epsilon \in \R} and \m{\eta_k \in \R^{3}}, the variation in \m{\ornt{k}} can be defined as
\begin{align}\label{eq:varR} 
\delta \ornt{k} \defas \left.\frac{d}{d \epsilon}\right|_{\epsilon=0} \ornt{k} \e^{(\epsilon \widehat{\eta}_k)}=\ornt{k} \widehat{\eta}_k, \text{\; where\;} \widehat{\eta}_k \in \so{3};
\end{align} 
then the duality product of \m{\widehat{\Pi}_{k} \in \so{3}^*} and \m{\widehat{\eta}_{k} \in \so{3}} is defined as
\begin{align*}
\frac{1}{2}\trace\left(\widehat{\Pi}_k \widehat{\eta}_k^\top\right)=:\ip{\widehat{\Pi}_k}{\widehat{\eta}_k} &= - \left.\frac{d}{d \epsilon}\right|_{\epsilon=0} L_d(\ornt{k}^\epsilon,\ornt{k+1}) = \frac{1}{h} \trace\left(\left(\delta \ornt{k}^\top \ornt{k+1}\right)\mint \right) \\
&= \frac{1}{h} \trace\left(\widehat{\eta}_k^\top \ornt{k}^\top \ornt{k+1}\mint \right) = \frac{1}{h} \trace\left(\rot{k} \mint \widehat{\eta}_k^\top \right).
\end{align*}
Hence, \[ \trace\left(\underbrace{\left(\frac{1}{2}\widehat{\Pi}_k - \frac{1}{h}\rot{k} \mint\right)}_{C_k}\widehat{\eta}_k^\top\right)=0 \text{\;\; for all\;\;} \widehat{\eta}_k \in \so{3},\]
which means that \m{C_k} is a symmetric matrix. Its skew-symmetric part is, therefore zero, and this leads to
\begin{align}
\widehat{h\Pi}_k = \rot{k} \mint -\mint \rot{k}^\top.
\end{align}
Similarly, the duality product of \m{\widehat{\Pi}_{k+1} \in \so{3}^*} and \m{\widehat{\eta}_{k+1} \in \so{3}} gives,
\begin{align*}
\frac{1}{2}\trace\left(\widehat{\Pi}_{k+1} \widehat{\eta}_{k+1}^\top\right)=:\ip{\widehat{\Pi}_{k+1}}{\widehat{\eta}_{k+1}} &= \left.\frac{d}{d \epsilon}\right|_{\epsilon=0} L_d(\ornt{k},\ornt{k+1}^\epsilon) = \frac{1}{h} \trace\left(-\left(\ornt{k}^\top \delta \ornt{k+1}\right)\mint \right) \\
&= -\frac{1}{h} \trace\left(\ornt{k}^\top \ornt{k+1}\widehat{\eta}_{k+1}\mint \right) = \frac{1}{h} \trace\left(\mint \rot{k} \widehat{\eta}_{k+1}^\top \right),
\end{align*}
and \[ \trace\left(\underbrace{\left(\frac{1}{2}\widehat{\Pi}_{k+1} - \frac{1}{h} \mint \rot{k} \right)}_{D_k}\widehat{\eta}_{k+1}^\top\right)=0 \text{\;\; for all\;\;} \widehat{\eta}_{k+1} \in \so{3}\]
which means that \m{D_k} is a symmetric matrix with its skew-symmetric part equal to zero. Therefore,
\begin{align}
\widehat{\Pi}_{k+1} = \frac{\mint \rot{k} - \rot{k}^\top \mint}{h} = \rot{k}^\top \widehat{\Pi}_k \rot{k} = \widehat{\rot{k}^\top {\Pi}_k},
\end{align}
leading to the following update equation for the momentum:
\begin{align}\label{eq:Mtmwithoutcontol}
\mtm{k+1} = \rot{k}^\top \mtm{k}.
\end{align}
In the presence of control, \eqref{eq:Mtmwithoutcontol} modifies to
\begin{align}\label{eq:MtmUpdt}
\mtm{k+1} = \rot{k}^\top \mtm{k} + h u_{k},
\end{align}
where \m{u_{k}} is the control input at \m{k}th instant of time. The rigid body equations in discrete time are finally obtained as:
\begin{align*}
\begin{matrix}\text{Rigid Body}\\ \text{Dynamics}\end{matrix}&\begin{cases} \ornt{k+1} \nsp&=\ornt{k}\rot{k},\\ \mtm{k+1} \nsp&=\rot{k}^\top \mtm{k} + h \cltr{k},\\ \widehat{h\mtm{k}} \nsp&= \rot{k}\mint - \mint\rot{k}^\top.\end{cases}
\end{align*}

    \section{Optimal control of discrete time attitude dynamics} \label{s:ocdtad}
We state the optimal control problem arising in executing energy optimal attitude manoeuvres of a spacecraft. The spacecraft is assumed to have three actuators, aligned along the three principal moment of inertia axes. Each actuator has its individual saturation limits. The objective is to find the energy optimal control profile for orienting the spacecraft from an initial configuration to a desired configuration in a given duration of time, while obeying pre-specified momentum bounds. First we pose this requirement as an optimal control problem, and then derive the first order necessary optimality conditions using variational analysis. Later \secref{ssec:msmad} the boundary value problem obtained as the first order necessary conditions will be solved using a novel multiple shooting method.
\subsection{Problem description}\label{ssec:PD}
Our objective is to find the energy optimal control law to manoeuvre a spacecraft from the initial configuration \m{(\ornt{i},\mtm{i})} to the final configuration \m{(\ornt{f},\mtm{f})} in \m{N} discrete time steps satisfying the following constraints:
\begin{enumerate}
\item \m{\abs{u_k^i}\leq\cltb^{i} \phantom{space} k=0,1, \ldots,N-1, \quad \text{and} \quad i=1,2,3,}
\item \m{\abs{\mtm{k}^i}\leq\mtmb^{i} \phantom{space} k=1,2, \ldots,N-1, \quad \text{and} \quad i=1,2,3.}
\end{enumerate}
This problem can be posed as an optimal control problem in discrete time as follows:
\begin{align} 
\minimize_{\{\cltr{k}\}}\mathfrak{J}:=\sum_{k=0}^{N-1} \frac{1}{2} \norm{\cltr{k}}^{2}_{2} \label{eq:opt}
\end{align}
subject to
\begin{align}\label{eq:pd_mdl}
&\text{system of equations}\begin{cases} \ornt{k+1}\nsp&=\ornt{k}\rot{k}\\ \mtm{k+1}\nsp&=\rot{k}^\top\mtm{k} + h \cltr{k}\\ \widehat{h\mtm{k}}\nsp&= \rot{k}\mint - \mint\rot{k}^\top\end{cases} \phantom{sp}\text{with}\\
&\text{boundary conditions} \quad (\ornt{0},\mtm{0})=(\ornt{i},\mtm{i}), (\ornt{N},\mtm{N})= (\ornt{f},\mtm{f}), \text{\; and\;} \label{eq:pd_bnd}\\
& \text{constraints\;\;}\begin{cases} \left(\cltr{k}^i\right)^2\leq\left(\cltb^{i}\right)^2 \quad \text{for all} \quad k=0,1,\ldots,N-1, \quad \text{and} \quad i=1,2,3,\\ \left(\mtm{k}^i\right)^2\leq \left(\mtmb^{i}\right)^2 \quad \text{for all}\quad k=1,2, \ldots,N-1, \quad \text{and}\quad i=1,2,3.\end{cases} \label{eq:pd_inq}
\end{align}
Note that the optimal control problem \eqref{eq:opt} has both control and state inequality constraints. While the individual control inputs are constrained in magnitude, the performance measure reflects a 2-norm on the control action at each stage.   
    \subsection{Necessary optimality conditions}
We represent the variations of \m{\rot{k}} in terms of variations in \m{\mtm{k}} and then the first order necessary conditions are derived. 
\begin{itemize}
\item \textbf{Representation of the variations:}\\
Using \eqref{eq:varR} the variations for the matrix \m{\ornt{k}^\top \ornt{k+1}} is defined as
\begin{align}\label{eq:rrk}
\delta\left(\ornt{k}^\top \ornt{k+1}\right)=\delta \ornt{k}^\top \ornt{k+1} + \ornt{k}^\top \delta \ornt{k+1}=-\widehat{\eta_k} \ornt{k} + \ornt{k} \widehat{\eta}_{k+1}.
\end{align}
Using the property \m{\widehat{F^\top x} = F^\top \widehat{x} F}, \eqref{eq:rrk} simplifies to 
\begin{align}
\delta\left(\ornt{k}^\top \ornt{k+1}\right)=\rot{k} \left(-\rot{k}^\top \eta_k + \eta_{k+1}\right)^{\wedge},\label{eq:ornt_var}
\end{align}
where \m{(\cdot)^{\wedge}:\R^3 \rightarrow \so{3}.}
Similarly, for a given \m{\xi_k \in \R^3,} we define the variation in \m{\rot{k} \in \SO{3}} as 
\begin{align}\label{eq:rot_var}
\delta \rot{k}=\rot{k}\widehat{\xi}_k.
\end{align}
The implicit equation \m{\widehat{h\mtm{k}} = \rot{k}\mint - \mint\rot{k}^\top} in \eqref{eq:pd_mdl} gives the relation between momentum and change in orientation at \m{k}th time instant. So, the relation between the variations in momentum \m{\delta \mtm{k}} and \m{\delta \rot{k}}  can be obtained from the implicit equation as~\cite{leok3d}
\begin{align*}
\widehat{h\delta\mtm{k}} = \delta\rot{k}\mint - \mint\delta\rot{k}^\top =\widehat{\rot{k}\xi_k} \rot{k} \mint + \mint \rot{k}^\top \widehat{\rot{k} \xi_k},
\end{align*}
which can be further simplified using \m{\widehat{x}A+A^\top\widehat{x}=\left(\left\{\trace[A]\I-A\right\}x\right)^\wedge} to
\begin{align*}
\widehat{h\delta\mtm{k}}=\left(\mtrt{k} \rot{k} \xi_k \right)^\wedge.
\end{align*}
\begin{restatable}{lemma}{invFkJd}
\label{lemma:invFkJd}
The matrix \m{\mtrt{k}} is invertible if \[\cos\left(\frac{\norm{\xi_k}}{2}\right) < \sqrt{\frac{2 d_3 + d_2 -d_1}{2(d_3 + d_2)}},\]
where \m{\rot{k} = \e^{\widehat{\xi}_k},\quad \xi_k \in \R^3}, and \m{\mint = diag(d_1,d_2,d_3) \text{\;such that\;} 0 < d_1 \leq d_2 \leq d_3.}
\end{restatable}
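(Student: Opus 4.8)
The plan is to turn invertibility into the non-vanishing of a single scalar. Set $A \defas \rot{k}\mint$, so that the matrix of interest is $\mtrt{k} = \trace(A)\I - A$ and invertibility is equivalent to $\det(\mtrt{k}) \ne 0$. For any $3\times3$ matrix the characteristic polynomial reads $\det(\lambda\I - A) = \lambda^3 - \trace(A)\lambda^2 + c_2(A)\lambda - \det(A)$, where $c_2(A)$ is the sum of the principal $2\times2$ minors; evaluating it at $\lambda = \trace(A)$ makes the two leading terms cancel and leaves the compact identity $\det(\mtrt{k}) = \trace(A)\,c_2(A) - \det(A)$. This is the structural backbone, and the point at which the special form $\trace(A)\I - A$ is used.

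\textbf{Evaluating the invariants.} I would then compute the two surviving invariants. Since $\rot{k}\in\SO{3}$ we get $\det(A) = \det(\rot{k})\det(\mint) = d_1 d_2 d_3$, and because $A$ is invertible, $c_2(A) = \det(A)\,\trace(A^{-1}) = d_1 d_2 d_3\,\trace(\mint^{-1}\rot{k}^\top)$, whence
\[ \det(\mtrt{k}) = d_1 d_2 d_3\big(\trace(\rot{k}\mint)\,\trace(\mint^{-1}\rot{k}^\top) - 1\big). \]
Thus the lemma reduces to the scalar inequality $\trace(\rot{k}\mint)\,\trace(\mint^{-1}\rot{k}^\top) \ne 1$. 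As $\mint$ is diagonal, both traces see only the diagonal of $\rot{k}$, and Rodrigues' formula for $\rot{k}=\e^{\widehat{\xi}_k}$ gives $(\rot{k})_{ii} = \cos\theta + (1-\cos\theta)\,n_i^2$, with $\theta\defas\norm{\xi_k}$ and $n\defas\xi_k/\norm{\xi_k}$. Substituting, $\trace(\rot{k}\mint) = \sum_i d_i\big(\cos\theta+(1-\cos\theta)n_i^2\big)$ and $\trace(\mint^{-1}\rot{k}^\top) = \sum_i d_i^{-1}\big(\cos\theta+(1-\cos\theta)n_i^2\big)$, so the whole expression is explicit in $\cos\theta$ and in the axis $n$. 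Introducing the half-angle identity $\cos\theta = 2\cos^2(\theta/2) - 1$ is what brings the threshold into the stated $\cos(\theta/2)$ form.

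\textbf{The decisive sign analysis.} Everything hinges on the sign of $g(\theta,n)\defas\trace(\rot{k}\mint)\,\trace(\mint^{-1}\rot{k}^\top) - 1$. A clean route I would try first bounds the spectrum of $A=\rot{k}\mint$ directly: since $\rot{k}$ is orthogonal, every eigenvalue obeys $\abs{\lambda}\le\norm{\rot{k}\mint} = d_3$, so the real number $\trace(A)$ fails to be an eigenvalue as soon as it is pushed outside $[-d_3,d_3]$. Minimizing $\trace(A)=\cos\theta\,(d_1+d_2+d_3)+(1-\cos\theta)\sum_i d_i n_i^2$ over the axis gives $d_1+(d_2+d_3)\cos\theta$, attained at a principal axis, and equating this to $d_3$ produces exactly the critical value $\cos\theta = \frac{d_3-d_1}{d_2+d_3}$, i.e. $\cos^2(\theta/2) = \frac{2d_3+d_2-d_1}{2(d_2+d_3)}$. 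The heart of the proof — and the step I expect to fight with — is that $g$ depends on the rotation axis $n$ and not on $\theta$ alone: an axis-free conclusion requires showing that the extremum over the simplex $\sum_i n_i^2=1$ is governed by the extreme moments $d_1,d_3$, verifying that the critical direction is a principal axis rather than an interior one, and pinning down on which side of the threshold $g$ keeps a constant nonzero sign. This axis optimization, together with fixing the correct orientation of the final inequality, is where the real work lies.
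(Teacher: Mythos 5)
Your ``Route B'' is, in substance, the paper's own proof, phrased slightly differently. Where you observe that $\mtrt{k}$ is singular exactly when $\trace\left(\rot{k}\mint\right)$ is an eigenvalue of $\rot{k}\mint$, whose spectral radius is at most $\norm{\mint}=d_3$, the paper reaches the same sufficient condition $\abs{\trace\left(\rot{k}\mint\right)}>d_3$ via the Banach lemma applied to $\I-\rot{k}\mint/\trace\left(\rot{k}\mint\right)$. And where you minimize $\trace\left(\rot{k}\mint\right)=\cos\theta\,(d_1+d_2+d_3)+(1-\cos\theta)\sum_i d_i n_i^2$ over the axis (writing $\theta\defas\norm{\xi_k}$, $n\defas\xi_k/\theta$ as you do), the paper parametrizes $\rot{k}$ by quaternions ($q_0=\cos(\theta/2)$, $q_i=n_i\sin(\theta/2)$) and maximizes a linear functional of $(q_1^2,q_2^2,q_3^2)$ over the simplex $\sum_i q_i^2=1-q_0^2$ at a vertex --- the identical computation, yielding the identical bound $\trace\left(\rot{k}\mint\right)\ge d_1+(d_2+d_3)\cos\theta$. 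Your ``Route A'' identity $\det\mtrt{k}=d_1d_2d_3\left(\trace\left(\rot{k}\mint\right)\trace\left(\mint^{-1}\rot{k}^\top\right)-1\right)$ is genuinely different from anything in the paper and strictly sharper (an exact criterion rather than a sufficient one), but your instinct that it resists an axis-free threshold is sound: the vanishing locus of your $g$ honestly depends on $n$, which is presumably why the paper settles for sufficiency.

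The ``real work'' you defer at the end is, however, already finished by your own computation, once one notices that the inequality in the lemma as printed is reversed. Your bound gives $\trace\left(\rot{k}\mint\right)>d_3$ precisely when $\cos\theta>(d_3-d_1)/(d_2+d_3)$, i.e.\ on the small-angle side $\cos(\theta/2)>\sqrt{\tfrac{2d_3+d_2-d_1}{2(d_2+d_3)}}$; and this is what the paper's final chain actually uses, since its displayed conclusion $d_1+d_2+d_3-2(d_2+d_3)(1-q_0^2)>d_3$ is equivalent to $q_0^2>\tfrac{2d_3+d_2-d_1}{2(d_2+d_3)}$ --- the opposite of the stated hypothesis. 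With ``$<$'' the lemma is false outright: take $d_1=d_2=d_3$ and $\theta=\pi$, so that $\trace\left(\rot{k}\mint\right)\I-\rot{k}\mint=-d_1\left(\I+\rot{k}\right)$ is singular while $\cos(\pi/2)=0$ satisfies the printed hypothesis. Reading the lemma with ``$>$'' --- the physically relevant regime of small step length $h$ --- your Route B closes the proof with no further sign analysis: the axis optimization is a linear program in $(n_1^2,n_2^2,n_3^2)$ on the simplex whose minimum sits at the vertex $n=e_1$, exactly as you computed, and no study of $g$ is needed for the sufficiency claim.
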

We present a proof of Lemma \ref{lemma:invFkJd} in Appendix \ref{app:invFkJd}. Armed with Lemma \ref{lemma:invFkJd}, we represent the vector \m{\xi_k} in terms of the variations in momentum i.e. \m{\delta \mtm{k}} ~\cite{leok3d} as
\begin{align}\label{eq:implicit_var}
\xi_k = \mathcal{B}_k \delta \mtm{k},
\end{align}
where \m{\mathcal{B}_k = h \rot{k}^\top \mtrt{k}^{-1}.} 
\item \textbf{Necessary optimality conditions:}\\
Let \m{\cor{k}\in \R^3} and \m{\com{k}\in \R^3} be the Lagrange multipliers corresponding to the equality constraints \m{\ornt{k+1}-\ornt{k}\rot{k}=0} and \m{\mtm{k+1}-\rot{k}^\top \mtm{k} - h \cltr{k}=0} respectively. Similarly let \m{0 \leq \clts{k}^i \in \R} and \m{0 \leq \mtms{k}^i \in \R} be the Lagrange multipliers corresponding to the inequality constraints \m{\left(\cltr{k}^i\right)^2\leq\left(\cltb^{i}\right)^2 \text{\; and \;} \left(\mtm{k}^i\right)^2\leq \left(\mtmb^{i}\right)^2 }. Let us justify why the Lagrange multiplier \m{\cor{k}\in \R^3} is chosen corresponding  to the rotational kinematics \m{\ornt{k+1}-\ornt{k}\rot{k}=0}. Rotational kinematics can be rewritten as \m{\ornt{k}^\top\ornt{k+1}-\rot{k}=0}, where \m{\rot{k}} can be identified by its skew symmetric part, which in turn can identified by a vector in \m{\R^3}~\cite{leokattitude}.
\begin{restatable}{claim}{fskew}
\label{claim:fskew}
Consider the equality \m{\ornt{k}^\top \ornt{k+1} = \rot{k}}. If we assume that the step length \m{h} is small enough such that the relative orientation \m{\ornt{k}^\top\ornt{k+1}} between two adjacent time instances \m{t_k} and \m{t_{k+1}} is less than \m{\frac{\pi}{2},} i.e.,
 \[\norm{\xi_k} ,\norm{\zeta_k} < \frac{\pi}{2} \text{\;where\;} \ornt{k}^\top\ornt{k+1} = \e^{\widehat{\zeta}_k},\rot{k} = \e^{\widehat{\xi}_k}\; and \; \zeta_k, \xi_k \in \R^3,\] then the aforementioned equality is satisfied if and only if the skew symmetric parts of both sides are identical.
\end{restatable}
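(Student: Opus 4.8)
The forward implication is immediate: if $\ornt{k}^\top \ornt{k+1} = \rot{k}$ as matrices, then their skew-symmetric parts coincide trivially. So the plan is to establish the converse, namely that equality of the skew-symmetric parts forces equality of the two rotation matrices under the stated angle bound. The cleanest route is through Rodrigues' formula for the exponential map on $\SO{3}$.

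First I would record the skew-symmetric part of a rotation in closed form. Writing a generic rotation as $\e^{\widehat{\theta}}$ with $\theta \in \R^3$ and $\norm{\theta} \in [0,\pi/2)$, Rodrigues' identity gives $\e^{\widehat{\theta}} = \I + \frac{\sin\norm{\theta}}{\norm{\theta}}\widehat{\theta} + \frac{1-\cos\norm{\theta}}{\norm{\theta}^2}\widehat{\theta}^2$. Since $\widehat{\theta}$ is skew-symmetric while $\I$ and $\widehat{\theta}^2$ are symmetric, the skew-symmetric part of $\e^{\widehat{\theta}}$ is exactly $\frac{\sin\norm{\theta}}{\norm{\theta}}\widehat{\theta}$. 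Applying this to $\ornt{k}^\top\ornt{k+1} = \e^{\widehat{\zeta}_k}$ and to $\rot{k} = \e^{\widehat{\xi}_k}$, and equating skew-symmetric parts, reduces the hypothesis to the vector identity
\[ \frac{\sin\norm{\zeta_k}}{\norm{\zeta_k}}\zeta_k = \frac{\sin\norm{\xi_k}}{\norm{\xi_k}}\xi_k. \]

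Second, I would take Euclidean norms of both sides. Because the scalar factors $\frac{\sin\norm{\cdot}}{\norm{\cdot}}$ are nonnegative on the relevant range, this yields $\sin\norm{\zeta_k} = \sin\norm{\xi_k}$. The hypothesis enters decisively here: on $[0,\pi/2)$ the sine function is strictly increasing, hence injective, so $\sin\norm{\zeta_k} = \sin\norm{\xi_k}$ forces $\norm{\zeta_k} = \norm{\xi_k}$. With the two norms equal, the scalar coefficients in the displayed identity are equal and strictly positive in the nondegenerate case, which forces $\zeta_k = \xi_k$ and therefore $\e^{\widehat{\zeta}_k} = \e^{\widehat{\xi}_k}$, i.e. $\ornt{k}^\top\ornt{k+1} = \rot{k}$. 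The degenerate case $\norm{\zeta_k} = \norm{\xi_k} = 0$ gives both matrices equal to $\I$, and I would dispose of the mixed case (one vector zero, the other not) by noting that positivity of $\frac{\sin\norm{\cdot}}{\norm{\cdot}}$ on $(0,\pi/2)$ rules it out.

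I expect the main obstacle to be conceptual rather than computational: one must pinpoint that the bound $\norm{\zeta_k},\norm{\xi_k} < \pi/2$ is precisely what guarantees injectivity of $\sin$ on the angle range. Without it, $\sin\norm{\zeta_k} = \sin\norm{\xi_k}$ would admit supplementary solutions (an angle and its complement to $\pi$), so the skew-symmetric part would no longer determine the rotation uniquely and the ``only if'' direction would fail. Everything else is the routine algebra attached to Rodrigues' formula.
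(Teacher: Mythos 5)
Your proposal is correct and takes essentially the same route as the paper's own proof: both use Rodrigues' formula to identify the skew-symmetric part of a rotation as \(\sin(\norm{\theta})\,\widehat{\theta}_e\), reduce the hypothesis to the vector identity \(\sin(\norm{\zeta_k})\,\zeta_{k,e} = \sin(\norm{\xi_k})\,\xi_{k,e}\), and invoke injectivity of \(\sin\) on the range guaranteed by the bound \(\norm{\zeta_k},\norm{\xi_k} < \pi/2\) to force \(\zeta_k = \xi_k\), handling the zero-rotation case separately. The only cosmetic difference is that you obtain \(\sin\norm{\zeta_k} = \sin\norm{\xi_k}\) by taking Euclidean norms of both sides, whereas the paper argues via the ratio of sines being \(\pm 1\) after a case split on \(\norm{\xi_k} = 0\); the substance is identical.
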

 The augmented performance index can be defined as
\begin{align*}
\minimize_{\abs{\cltr{k}^i}\leq \cltb^{i}}\mathfrak{J}_a:=&\sum_{k=0}^{N-1}\frac{1}{2}\ip{\cltr{k}}{\cltr{k}} + \ip{\com{k}}{-\mtm{k+1}+\rot{k}^\top \mtm{k}+ h \cltr{k}}\\
& + \ip{\cor{k}}{\frac{1}{2}\left(\rot{k}-\rot{k}^\top\right)^\vee-\frac{1}{2}\left(\ornt{k}^\top \ornt{k+1}-\ornt{k+1}^\top \ornt{k}\right)^\vee}\\
& + \frac{1}{2} \ip{\clts{k}}{\cltr{k} \odot \cltr{k} - \cltb \odot \cltb} +\sum_{l=1}^{N-1}\frac{1}{2} \ip{\mtms{l}}{\mtm{l} \odot \mtm{l} - \mtmb \odot \mtmb}, 
\end{align*}
where
\[\cltb= \rvec{\cltb}^\top, \quad \mtmb= \rvec{\mtmb}^\top, \quad x\odot y = \begin{pmatrix}x^1y^1 & x^2y^2 & x^3y^3\end{pmatrix}^\top. \]

Using \eqref{eq:ornt_var}, \eqref{eq:rot_var} and \eqref{eq:implicit_var}, the infinitesimal variation of the augmented performance index is defined as
\begin{align*}
\delta\mathfrak{J}_a:=&\sum_{k=0}^{N-1}\{\ip{\delta\cltr{k}}{\cltr{k}} + \ip{\frac{\cor{k}}{2}}{\delta\left(\rot{k}-\rot{k}^\top\right)^\vee-\delta\left(\ornt{k}^\top \ornt{k+1}-\ornt{k+1}^\top \ornt{k}\right)^\vee}\\
& + \ip{\com{k}}{-\delta\mtm{k+1}+\delta\rot{k}^\top \mtm{k} + \rot{k}^\top\delta\mtm{k}+h \delta \cltr{k}} + \ip{\delta \cltr{k}}{\clts{k} \odot \cltr{k}}\}\\
& +\sum_{l=1}^{N-1} \ip{\delta\mtm{l}}{\mtms{l} \odot \mtm{l}},
\end{align*}
where 
\begin{align}\label{eq:nneg} 
\clts{k}^i,\mtms{l}^i  \geq 0 \quad \text{for all \;} i=1,2,3,
\end{align}
together with the complementary slackness conditions
\begin{align}\label{eq:comslack}
\mtms{l}^i\left((\mtm{l}^i)^2 - (\mtmb^{i})^2\right)=0 \quad \text{and} \quad \clts{k}^i\left((u_{k}^i)^2 - (\cltb^{i})^2\right)=0.
\end{align}
Employing the property that \m{\widehat{x}A + A^\top \widehat{x} = \left\{\left(\trace[A]\I-A\right)x\right\}^\wedge} and that the variation on the boundary is zero, i.e., \m{\eta_0 = \eta_N =0 , \delta\mtm{0}=\delta\mtm{N} =0}, we rearrange the terms of the expression \m{\delta\mathfrak{J}_a\;}and get the following expression after standard algebraic manipulations:
\begin{align}
\delta\mathfrak{J}_a & = \sum_{k=0}^{N-1}\ip{\delta\cltr{k}}{\cltr{k}+h\com{k}+\clts{k} \odot \cltr{k}} \nonumber \\
&+ \sum_{k=1}^{N-1} \ip{\frac{\eta_k}{2}}{\rot{k}\tr{k}\cor{k}-\tr{k-1} \cor{k-1}} \nonumber\\
&+ \sum_{k=1}^{N-1} \ip{\delta\mtm{k}}{-\com{k-1}+\left(\rot{k}-\mathcal{B}_k^\top \widehat{\rot{k}^\top\mtm{k}}\right)\com{k} }\nonumber\\
&+ \sum_{k=1}^{N-1} \ip{\delta\mtm{k}}{\mtms{k} \odot \mtm{k}+\frac{1}{2}\mathcal{B}_k^\top \tr{k}\cor{k}}.\label{eq:acv}
\end{align}
By first order necessary condition of optimality, \m{\delta\mathfrak{J}_a = 0} along all possible variations \m{\delta \cltr{k}, \eta_k, \delta\mtm{k}}, and we obtain the co-state equations from \eqref{eq:acv} as:
\begin{align}\label{eq:costate}
\begin{matrix}\text{Co-state}\\ \text{Dynamics}\end{matrix} 
\begin{cases}
\tr{k-1} \cor{k-1}-\rot{k}\tr{k}\cor{k}=0,\\
\mtms{k} \odot \mtm{k} - \com{k-1}+\left(\rot{k}-\mathcal{B}_k^\top \widehat{\rot{k}^\top\mtm{k}}\right)\com{k} +\frac{1}{2}\mathcal{B}_k^\top \tr{k}\cor{k} =0,
\end{cases}
\end{align}
and the optimality condition for the control as
\begin{align}\label{eq:cltr}
\ocltr{k}+h\com{k}+\clts{k} \odot \ocltr{k} = 0. 
\end{align}
From \eqref{eq:comslack} we know that if \m{\ocltr{k}^i } lies in the interior of the feasible region \m{\mathcal{C}^i \defas \left\{u^i \in \R \left| (u^i)^2 < (c^i)^2 \right. \right\}}, then  \m{\clts{k}^i =0}. So, by \eqref{eq:cltr} we have \m{h\com{k}^i=-\ocltr{k}}. On the other hand if \m{\ocltr{k}^i = \cltb^{i}} then by \eqref{eq:cltr} and \eqref{eq:nneg} we have \m{h\com{k}^i= -\cltb^{i} (1+\clts{k}^i)\leq- \cltb^{i}}. Similarly for \m{\ocltr{k}^i = -\cltb^{i}}, we have \m{h\com{k}^i =\cltb^{i} (1+\clts{k}^i)\geq \cltb^{i}}. Hence, the optimal control can be written in a compact form as 
\begin{align}\label{eq:ocltr}
\ocltr{k}^i = - \min \left\{c^i, \abs{h\com{k}^i}\right\} \sgn (\com{k}^i).
\end{align}
\end{itemize}

	We represent the state and co-state dynamics in terms on momentum and co-state corresponding to momentum variables. This technique will reduce the model of the system and hence the algorithm will perform better in terms of memory and  time requirement.
 
\section{Scaling and model reduction}\label{s:smr}
The orientation boundary constraints \m{\left(\ornt{0},\ornt{N}\right) = \left(\ornt{i},\ornt{f}\right)\;} can be represented in terms of \m{\rot{k},} which in turn can be computed for a given \m{\mtm{k}} using the implicit form  
\begin{align}
 \widehat{h\mtm{k}}= \rot{k}\mint - \mint\rot{k}^\top. \label{eq:imp}
\end{align}
We represent orientation constraints first in terms of \m{\rot{k}} and then we discuss a technique for computing \m{\rot{k}} for a given value of \m{\mtm{k}}.
\subsection{Representing orientation constraints in terms of momentum} 
We represent the boundary constraints on orientation \m{\left(\ornt{0},\ornt{N}\right) = \left(\ornt{i},\ornt{f}\right)\;} \eqref{eq:pd_bnd} in terms of \m{\rot{k}}. Then using \eqref{eq:pd_mdl} we reconstruct the orientation \m{\ornt{k}} variables, we see that \m{\ornt{f}} can be represented in terms of \m{\rot{k}} for \m{k=1,2, \ldots, N-1,} as
\[\ornt{f}=\ornt{N} = \ornt{i}\rot{0}\rot{1}\rot{2} \ldots \rot{N-1}.\]
The boundary constraints of orientation can be rewritten as:
\begin{align}\label{eq:ornt_bnd}
\ornt{i}^\top \ornt{f} = \rot{0}\rot{1}\rot{2} \ldots \rot{N-1}.
\end{align}
To represent the constraints \eqref{eq:ornt_bnd} in vector form, we have nine equations. From \eqref{eq:costate} we know that the number of free variables corresponding to orientation kinematics are actually three: \m{\cor{0} \in \R^3}. So, the boundary conditions \eqref{eq:ornt_bnd} have to be represented by three independent constraints. 
Let us define the maps \m{\SO{3} \ni M \mapsto \logm(M) \in \so{3}} and \m{\so{3} \ni x \mapsto (x)^{\vee} \in \R^3}. Then the boundary condition \eqref{eq:ornt_bnd} is satisfied if 
\begin{align}\label{eq:ocst}
\text{Orientation constraints:\;}\quad \Orntcnst  \defas \left(\logm\left(\ornt{f}^\top \ornt{i} \rot{0}\rot{1}\rot{2} \ldots \rot{N-1} \right) \right)^{\vee} = 0.
\end{align}
We need to compute the gradient of the orientation constraints \eqref{eq:ocst} w.r.t. momentum \m{\mtm{k}}. This can be done as follows:
From \eqref{eq:ocst} we know that
\[\e^{\Orntcnst^\wedge} = \ornt{f}^\top \ornt{i} \rot{0}\rot{1}\ldots\rot{k} \ldots \rot{N-1}, \]
leading to
\[\e^{\Orntcnst^\wedge}\der{\mtm{k}^i}\Orntcnst^\wedge = \ornt{f}^\top \ornt{i} \rot{0}\rot{1}\ldots \der{\mtm{k}^i}(\rot{k}) \ldots \rot{N-1}. \]
 After algebraic manipulations we get
\[\der{\mtm{k}^i}\Orntcnst^\wedge =  \rot{N-1}^\top \rot{N-2}^\top \ldots \rot{k}^\top \der{\mtm{k}^i}(\rot{k}) \rot{k+1}\ldots  \rot{N-1},\]
and using the property \m{\widehat{A^\top x} = A^\top \hat{x} A } we conclude that
\[ \der{\mtm{k}^i}\Orntcnst = \rot{N-1}^\top \rot{N-2}^\top \ldots \left(\rot{k}^\top \der{\mtm{k}^i}\rot{k}\right)^{\vee}.\]
Therefore, the gradient of the orientation constraints \eqref{eq:ocst} w.r.t. momentum \m{\mtm{k}} is
\begin{align}\label{eq:docst}
\der{\mtm{k}}\Orntcnst=\begin{pmatrix} \der{\mtm{k}^1}\Orntcnst & \der{\mtm{k}^2}\Orntcnst & \der{\mtm{k}^3}\Orntcnst. \end{pmatrix}
\end{align}

We notice that the orientation constraints \eqref{eq:ocst} can be computed only when we construct the matrix \m{\rot{k}} for a given value of the momentum vector \m{\mtm{k}}. Similarly, the gradient of the orientation constraints \eqref{eq:docst} can be computed only when we find the derivative of \m{\rot{k}} w.r.t. \m{\mtm{k}^i} for \m{k=1,2,\ldots, N-1,} and \m{i=1,2,3.}

\subsection{Determining \m{\rot{k}} and \m{\drot{k}} in terms of momentum \m{\mtm{k}}}\label{ssec:rotmtm}
In order to calculate the orientation constraints \eqref{eq:ocst}, \m{\rot{k}} is obtained in terms of \m{\mtm{k}}. We know that \m{\rot{k}} can be obtained from \m{\mtm{k}} by solving the implicit equation \eqref{eq:imp}. To solve this implicit form, we choose quaternions to parameterize the matrix \m{\rot{k}}. 
Let 
\[\rot{k} \defas \begin{pmatrix} q_0^2+q_1^2-q_2^2-q_3^2 & 2 q_1 q_2-2 q_0 q_3 & 2 q_1 q_3 + 2 q_0 q_2\\ 2 q_1 q_2+2 q_0 q_3 &  q_0^2-q_1^2+q_2^2-q_3^2 &  2 q_2 q_3 - 2 q_0 q_1\\ 2 q_1 q_3 - 2 q_0 q_2 & 2 q_2 q_3 + 2 q_0 q_1 & q_0^2-q_1^2-q_2^2+q_3^2\end{pmatrix}, \]
 and the inertia matrix \m{\inertia} and \m{\mint} are defined as:
\[ \inertia \defas \begin{pmatrix} \ii{x} &0 &0\\0 & \ii{y} & 0\\ 0 & 0 & \ii{z} \end{pmatrix}, \phantom{spa} \mint \defas \frac{1}{2}\begin{pmatrix} -\ii{x}+\ii{y}+\ii{z} &0 &0\\0 & \ii{x}-\ii{y}+\ii{z} & 0\\ 0 & 0 & \ii{x}+\ii{y}-\ii{z} \end{pmatrix}.\]
Let \m{\mtm{k} \defas\begin{pmatrix} \mtm{k}^1 & \mtm{k}^2 & \mtm{k}^3 \end{pmatrix}^\top} be the momentum of the body at the \m{k}th instant, and \m{q=\left(q_0,q_1,q_2,q_3\right)^\top}; then \eqref{eq:imp} can be represented in the form of nonlinear algebraic equations as follows:
\begin{align}\label{eq:algqm}
g\left(q(\mtm{k}),\mtm{k}\right) \defas
\begin{pmatrix} 2 q_2 q_3 \left(\ii{z}-\ii{y}\right) + 2 q_0 q_1 \ii{x} -h \mtm{k}^1 \\ 
          2 q_1 q_3 \left(\ii{x}-\ii{z}\right) + 2 q_0 q_2 \ii{y} -h \mtm{k}^2 \\ 
              2 q_1 q_2 \left(\ii{y}-\ii{x}\right) + 2 q_0 q_3 \ii{z} - h \mtm{k}^3 \\ 
          q_0^2+q_1^2+q_2^2+q_3^2 -1
\end{pmatrix} = 0.	
\end{align}
For a fixed value of momentum \m{\mtm{k}}, the system of equation \eqref{eq:algqm} has quaternions as unknown parameters which can be numerically found using Newton's method at each instant of time. 
Since \m{\rot{k}} is represented in terms of quaternions, we first need to find the variations of the quaternions in terms of the momentum vector \m{\mtm{k}}; these can be obtained by taking derivative of \eqref{eq:algqm} w.r.t. \m{\mtm{k}} as 
\[\der{q} g\left(q(\mtm{k}),\mtm{k}\right) \der{\mtm{k}} q(\mtm{k})+ \der{\mtm{k}} g\left(q(\mtm{k}),\mtm{k}\right) =0, \]
resulting in
\begin{align}\label{eq:dqtnmtm}
\der{q} g\left(q(\mtm{k}),\mtm{k}\right)
&\underbrace{\begin{pmatrix} \frac{\partial q_0}{\partial \mtm{k}^1} & \frac{\partial q_0}{\partial \mtm{k}^2} & \frac{\partial q_0}{\partial \mtm{k}^3}\\ \frac{\partial q_1}{\partial \mtm{k}^1} & \frac{\partial q_1}{\partial \mtm{k}^2} & \frac{\partial q_1}{\partial \mtm{k}^3}\\ 
\frac{\partial q_2}{\partial \mtm{k}^1} & \frac{\partial q_2}{\partial \mtm{k}^2} & \frac{\partial q_2}{\partial \mtm{k}^3}\\ 
\frac{\partial q_3}{\partial \mtm{k}^1} & \frac{\partial q_3}{\partial \mtm{k}^2} & \frac{\partial q_3}{\partial \mtm{k}^3}
\end{pmatrix}}_{\der{\mtm{k}}q(\mtm{k})}
= \begin{pmatrix} h& 0& 0\\0& h& 0\\ 0& 0& h\\ 0& 0& 0 \end{pmatrix},
\end{align}
where
\begin{align*}
\der{q} g\left(q(\mtm{k}),\mtm{k}\right) =
\begin{pmatrix}2 q_1 \ii{x} & 2 q_0 \ii{x} & 2 q_3 \left(\ii{z}-\ii{y}\right) & 2 q_2 \left(\ii{z}-\ii{y}\right) \\
           2 q_2 \ii{y} & 2 q_3 \left(\ii{x}-\ii{z}\right) & 2 q_0 \ii{y} & 2 q_1 \left(\ii{x}-\ii{z}\right)\\
           2 q_3 \ii{z} & 2 q_2 \left(\ii{y}-\ii{x}\right) & 2 q_1 \left(\ii{y}-\ii{x}\right) & 2 q_0 \ii{z}\\
           2 q_0 & 2 q_1 & 2 q_2 & 2 q_3 
\end{pmatrix}. 
\end{align*}
The matrix \m{\der{\mtm{k}}q} can be obtained by solving the linear system \eqref{eq:dqtnmtm}. The derivative of the matrix \m{\rot{k}} w.r.t. \m{\mtm{k}^i} is obtained using the chain rule as
\begin{align}
\der{\mtm{k}^i} \rot{k} = \sum_{n=0}^{3} \der{q_n}\rot{k} \frac{\partial q_n}{\partial \mtm{k}^i}.
\end{align}
We now discuss the reduction of the difference equation model \eqref{eq:pd_mdl}, \eqref{eq:costate} to the momentum and comomentum dynamics respectively and later the reduced model is scaled by change of variables so as to make the model invariant under change in the step length \m{h}. Invariance of the difference equation model means that for a particular manoeuvre the optimal trajectory and corresponding Lagrange multipliers remain identical for different step lengths \m{h}. This matter is quite essential because it largely affects the region of convergence and order of convergence of the algorithm~\cite{gill}. 

\subsection{Scaling and model reduction}
First we discuss about model reduction and later scaling of the reduced model by appropriate change of variable. Let us define a new variable \m{\mcor{k} \defas 2h^{-2} \tr{k} \cor{k}} then \eqref{eq:costate} can be written as 
\begin{align} \label{eq:mcsysm}
\mtms{k} \odot \mtm{k} -\com{k-1} + \left(\rot{k}-h\mathcal{N}_k \widehat{\rot{k}^\top\mtm{k}}\right)\com{k} + \frac{\mathcal{N}_k}{h} \mcor{k} = 0\\
\mcor{k-1} -\rot{k} \mcor{k} =0 \label{eq:mcsysr}
\end{align}
where \m{\mathcal{N}_k = \frac{\mathcal{B}_k^\top}{h}}. From \eqref{eq:mcsysr} we conclude \m{\mcor{k} = Q_k^\top \mcor{0}} such that \m{Q_k = \rot{1}\rot{2}\ldots\rot{k}}. So \eqref{eq:mcsysm} and \eqref{eq:pd_mdl} can be further reduced to the system of difference equations 
\begin{align} \label{eq:sde}
&\begin{pmatrix}
\mtm{k+1}-\rot{k}^\top\mtm{k} - h \ocltr{k}\\
\mtms{k+1} \odot \mtm{k+1}-\com{k} +\left(\rot{k+1}-h\mathcal{N}_{k+1} \widehat{\rot{k+1}^\top\mtm{k+1}}\right)\com{k+1} +  \mathcal{N}_{k+1} \frac{Q_{k+1}^\top}{h}\mcor{0}\end{pmatrix} =0.
\end{align}
Assume for the moment that for a manoeuvre the \m{i}th element of the control vector \m{\ocltr{k}} saturates (i.e., \m{\abs{\ocltr{k}^i} = \cltb^{i}}) at the \m{k}th instant of time. Clearly, from \eqref{eq:ocltr} \m{\com{k}^i \geq \frac{\cltb^i}{h}.} If the step length \m{h} is small, \m{\com{k}^i} will take very large values, which makes the difference equations \eqref{eq:sde} stiff. To avoid this situation, we define a variable \m{\mcom{k} \defas h\com{k}}, and modify the difference equation \eqref{eq:sde} to 
\begin{align} \label{eq:rsde}
&\begin{pmatrix}
\mtm{k+1}-\rot{k}^\top\mtm{k} - h \ocltr{k}\\
h\mtms{k+1} \odot \mtm{k+1}-\mcom{k} +\left(\rot{k+1}-h\mathcal{N}_{k+1} \widehat{\rot{k+1}^\top\mtm{k+1}}\right)\mcom{k+1} +  \mathcal{N}_{k+1} Q_{k+1}^\top\mcor{0}\end{pmatrix} =0,
\end{align}
where 
\begin{align}\label{eq:mocltr}
\ocltr{k}^i &= -\min \left\{c^i, \abs{\mcom{k}^i}\right\} \sgn (\mcom{k}^i),\quad \mathcal{N}_k &= \mtr{k}^{-1} \rot{k}. 
\end{align}
Note that \m{\rot{k}} and its gradient with respect to momentum can be obtained as discussed in \secref{ssec:rotmtm}. In the following section we employ multiple shooting method to solve the system of difference equations \eqref{eq:rsde} with boundary conditions \eqref{eq:pd_bnd} and constraints \eqref{eq:pd_inq}.    

    \section{Multiple Shooting Methods} \label{s:msm}
Shooting methods were mainly developed for solving ordinary differential equations or difference equations with given boundary conditions. An initial guess is taken for the unknown initial values of the differential or difference equation variables. Then the variables are computed at the terminal time and compared with a known value of the variables at the boundaries. Then the initial guess is improved at each iteration to match with the known boundary values. The multiple shooting methods, the time domain is divided into sub-intervals (time domain decomposition), and the boundary value problems are solved for each sub-interval with the condition that the boundary values at the common points of the adjacent intervals are the same. Multiple shooting methods are generalizations of the single shooting method in the sense that  multiple shooting with a single time interval is equivalent to the shooting method~\cite{helkePhd08}. \\
\phantom{space} Multiple shooting methods have many advantages over single shooting methods. The former are more stable, and hence can be applied to stiff problems, and the time domain decomposition allows one to introduce the initial guess to the problem with prior knowledge. Multiple shooting methods allow one to compute the solution of the differential equation at individual intervals, which can be very efficient in computation using parallel architecture~\cite{helkePhd08}.

\subsection{A quick introduction to multiple shooting methods}
 Multiple shooting methods constitute generalizations of the single shooting method, in which two point boundary value problems are solved at each iteration for the subintervals of time domain simultaneously. Let 
\begin{align}\label{eq:bvp}
&\dot{x}=f(x,t) \phantom{space} x \in \R^{n} \quad \text{with boundary conditions}\\
&B:\R^n \times \R^n \rightarrow \R^n \phantom{sp} \text{such that\;} B(x(a),x(b)) = 0.
\end{align}
Let the time domain be decomposed into \m{N} sub-interval as
\[ t_0=\tau_{0}<\tau_{1} \ldots \tau_{N-1}<\tau_{N}=t_f,\]
and let us consider \m{(N+1)} variables \m{s^0,s^1,\ldots,s^n,} known as the multiple shooting variables. These multiple shooting variables are the guessed initial values of the dependent variable \m{x} defined in \eqref{eq:bvp} at the specified time instants. We now define initial value problems for each sub-interval as
\begin{align}\label{eq:bvpi}
&\dot{x}^k=f(x^k,t) \quad \text{such that\;}x^k:[\tau_{k}, \tau_{k+1}] \rightarrow \R^{n} \text{\;with}\\
& x^{k}(\tau_{k})=s^{k} \text{\; for\;} k=0,1,\ldots,(N-1).
\end{align}
Note that the solution \m{\left\{x^k(\cdot,s^{k})| k=0,1, \ldots,(N-1)\right\}} to the initial value problems \eqref{eq:bvpi} can be a solution to the boundary value problem \eqref{eq:bvp} only if the solution \m{x^k} of the interval \m{[\tau_{k}, \tau_{k+1}]} matches with the initial condition for the next interval i.e. \m{x^k(\tau_{k+1},s^k)=s^{k+1}}. This condition is known as the matching condition. 
These matching conditions for each interval can be combined together and can be represented in stacked form as
\begin{align}\label{eq:mcds}
F(s) \defas \begin{pmatrix} s^1-x^0(\tau_{1},s^0)\\s^2-x^1(\tau_{2},s^1)\\ \vdots \\s^N-x^{N-1}(\tau_{N},s^{N-1})\\ B(x(a),x(b)) \end{pmatrix} =0, 
\end{align}
with \m{x^k(\cdot,s^k)} the solution of initial value problem \eqref{eq:bvpi}, and \m{s \defas \left(s^0,s^1,\ldots,s^N\right)}. The algebraic equations \eqref{eq:mcds} can be solved using Newton type algorithms for multi-variable functions is described in Algorithm \ref{alg:msm}. 
\begin{algorithm}[H]
\caption{Multiple shooting method}\label{alg:msm}
\begin{algorithmic}[1]
\Procedure{Root of the function \m{F}}{}
\State Choose appropriate initial guess \m{s} and the error tolerance bound \m{\delta>0}.
\State \textbf{Stopping Criterion:}
\If {\m{\norm{F(s)}_{\infty} < \delta}} Stop.  
\EndIf
\State Calculate \m{\Delta s} by solving the linear system \m{\der{s}F(s)\Delta s = - F(s)}. 
\State Update the value of \m{s=s+\Delta s.} 
\State \textbf{goto} \textbf{Stopping Criterion}.
\EndProcedure
\end{algorithmic}
\end{algorithm}

\subsection{Multiple shooting method for attitude dynamics} \label{ssec:msmad}
The system dynamics \eqref{eq:rsde} with the boundary conditions \eqref{eq:ocst} and \m{\left(\mtm{0},\mtm{N} \right) = \left(\mtm{i},\mtm{f} \right)} have to be solved with additional inequality constraints \m{(\mtm{k}^i)^2 \leq (\mtmb^i)^2} along with the complimentary slackness conditions \eqref{eq:comslack} defined by
\begin{align} \label{eq:sys}
&\begin{cases}
\mtm{k+1}-\rot{k}^\top\mtm{k} - h \ocltr{k}=0,\\
h\mtms{k+1} \odot \mtm{k+1}-\mcom{k} +\left(\rot{k+1}-h\mathcal{N}_{k+1} \widehat{\rot{k+1}^\top\mtm{k+1}}\right)\mcom{k+1} +  \mathcal{N}_{k+1} Q_{k+1}^\top\mcor{0}=0, \end{cases}\\
&\text{boundary conditions} \begin{cases} \mtm{N} - \mtm{f} = 0, \quad \mtm{0} - \mtm{i} = 0, \quad \Orntcnst = 0, \end{cases} \label{eq:bnd} \\
&\text{slackness conditions} \begin{cases} \mtms{k}^i\left((\mtm{k}^i)^2 - (\mtmb^{i})^2\right)=0, \quad \mtms{k}^i \geq 0, \quad (\mtm{k}^i)^2 \leq (\mtmb^i)^2,\end{cases} \label{eq:inq}
\end{align}
where 
\begin{align*}
&\ocltr{k}^i = - \min \left\{c^i, \abs{\mcom{k}^i}\right\} \sgn (\mcom{k}^i), \quad \mathcal{N}_k = \mtr{k}^{-1} \rot{k},\\
&\Orntcnst \defas \left(\logm\left(\ornt{f}^\top \ornt{i} \rot{0}\rot{1}\rot{2} \ldots \rot{N-1} \right) \right)^{\vee}.
\end{align*}
 First we consider the case in which state constraints are not active (i.e., \m{\mtms{k}=0 \;} for all \m{k}). Then \eqref{eq:inq} is trivially satisfied. If state constraints are not active, then the necessary conditions are defined by \eqref{eq:sys} with \m{\mtms{k}=0 \;} for all \m{k}, and \eqref{eq:bnd}. We now represent the matching and boundary condition for the system of difference equations \eqref{eq:sys} and \eqref{eq:bnd} in terms of \m{\mtm{k}, \mcom{k}} and \m{\mcor{0}} and solve the system of nonlinear algebraic equations comprises of matching conditions.

\subsubsection{Matching conditions for multiple shooting methods}
The system of difference equations \eqref{eq:sys} along with the boundary conditions \eqref{eq:bnd} will result in the following set of matching conditions, which can be solved using Newton's root finding algorithm with quadratic convergence rate~\cite{polyaknewton}.\\
Solve:
\begin{align}\label{eq:matching}
\mathcal{M}(X)\defas 
\begin{pmatrix}
\Sigma_1 & \Xi_1 & \ldots& \Sigma_k& \Xi_k&  \ldots& \Sigma_{N}& \Xi_{N}&
\Mtmcnst^\top & \Orntcnst^\top \end{pmatrix}^\top = 0,
\end{align}
where 
\begin{align*}
&X \defas \begin{pmatrix}\mtm{0}^\top & \mcom{0}^\top &\ldots &\mtm{k}^\top &\mcom{k}^\top &\ldots &\mtm{N}^\top &\mcom{N}^\top &\mcor{0}^\top \end{pmatrix}^\top,\\
&\Sigma_{k+1} \defas \left(\mtm{k+1}-\rot{k}^\top\mtm{k} - h \ocltr{k}\right)^\top, \\
& \Xi_{k+1} \defas \left(\mathcal{N}_{k+1} Q_{k+1}^\top \mcor{0} + \left(\rot{k+1}-h\mathcal{N}_{k+1} \widehat{\rot{k+1}^\top \mtm{k+1}}\right) \mcom{k+1} -\mcom{k}\right)^\top , \\
& \Mtmcnst \defas \begin{pmatrix} \mtm{N} - \mtm{f} \\ \mtm{0} - \mtm{i} \end{pmatrix}, \quad \Orntcnst \defas \left( \logm \left( \ornt{f}^\top \ornt{i} \rot{0}\rot{1}\rot{2} \ldots \rot{N-1} \right) \right)^{\vee}. 
\end{align*}
Let \m{\der{X}\mathcal{M}} be the Jacobian matrix of the matching conditions \m{\mathcal{M}} \eqref{eq:matching}. Assuming \m{X_n} to be the solution of the system \m{\mathcal{M}} at the \m{n}th iteration of the Newton's root finding algorithm, the \m{(n+1)}th iteration is given by
\[X_{n+1} = X_n + \Delta X_n, \] 
where \m{\Delta X_n } is a solution of the linear system \m{\der{X}\mathcal{M}(X_n)\Delta X_n = -\mathcal{M}(X_n).}

	It is important to note that Newton's update \m{\Delta X_n} is not necessarily unique; \m{\Delta X_n} is unique if and only  if \m{\der{X}\mathcal{M}(X_n)} is invertible. Invertibility  of the matrix  \m{\der{X}\mathcal{M}} is proved for a special case (when only momentum dynamics are considered) in  Appendix \ref{app:nsgrad}. 
    
    Notice that \m{\ocltr{k}} defined in \eqref{eq:mocltr} is not differentiable, but it is Lipschitz continuous. So, we take its generalized gradient~\cite{clarkegg}, defined by 
\[\left(\der{\mcom{k}}\ocltr{k}\right)_{ii}=\begin{cases} -1 \text{\; if\;} \mcom{k}^i \leq \cltb^i, \\ -1 \text{\; if\;} -\mcom{k}^i \geq -\cltb^i,\\\; 0 \phantom{sp} \text{\; otherwise \;}\end{cases} \quad \text{for\;} i = 1,2,3.\]
 The resulting nonlinear algebraic equations \eqref{eq:matching} can be solved using the non-smooth version of Newton's method in \cite{qi}, and this is illustrated with the help of the flowchart shown in Figure \ref{fig:newton}.

\begin{figure}
\centering
\begin{tikzpicture}[scale=0.5,node distance = 1.5cm, auto]
\tikzstyle{decision} = [diamond, draw, fill=blue!20, 
    text width=5em, text badly centered, node distance=3cm, inner sep=0pt]
\tikzstyle{block} = [rectangle, draw, fill=blue!20, 
    text width=12em, text centered, rounded corners, minimum height=5em]
\tikzstyle{line} = [draw, -latex']
\tikzstyle{cloud} = [draw, ellipse,fill=red!20, node distance=3cm,
    minimum height=2em]    
    \node [block] (init) {Initialize \m{X} and the Error Tol \m{\delta>0}};
    \node [cloud, above of=init,node distance=1.7cm] (Start) {Start};
    \node [block, below of=init, node distance=2.4cm] (evaluate) {Evaluate the Matching Conditions \m{\mathcal{M}(X)} and its gradient \m{\der{X}\mathcal{M}(X)}};
    \node [block, left of=evaluate, node distance=5.1cm] (update) {update \m{X=X+\Delta X}};
    \node [decision, below of=evaluate,node distance=3cm] (decide) {Is \m{\norm{\mathcal{M}(X)}_{\infty}\leq \delta} ? };
    \node [block, left of= decide, node distance=5.1cm] (solve) {Evaluate \m{\Delta X} by solving the Linear System: \\ \m{\der{X}\mathcal{M}  (X)\Delta X = - \mathcal{M}(X)}};
    
    \node [cloud, below of=decide, node distance=2.3cm] (stop) {Stop};
    \path [line] (init) -- (evaluate);
    \path [line] (evaluate) -- (decide);
    \path [line] (decide) -- node [near start] {No} (solve);
    \path [line] (solve) -- (update);
    \path [line] (update) |- (evaluate);
    \path [line] (decide) -- node {Yes}(stop);
    \path [line] (Start) -- (init);    
\end{tikzpicture}
\caption{Newton's root finding algorithm} 
\label{fig:newton}
\end{figure}
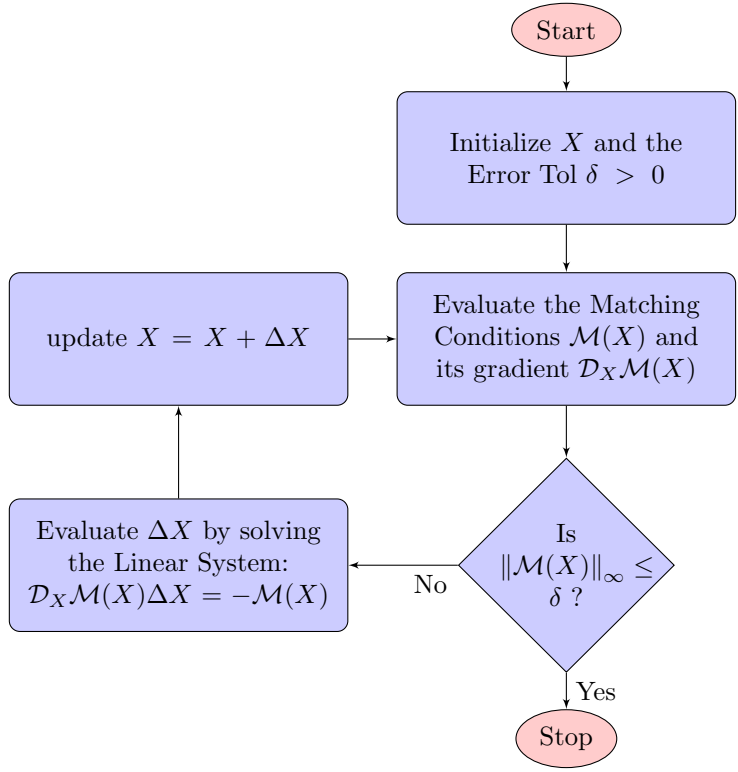

\subsubsection{Multiple shooting method with state constraints}
Here we discuss the general case in which momentum constraints are active. By the complementary slackness conditions \eqref{eq:inq}, it is clear that if the momentum constraints are not active, then \m{\beta_{k}^i = 0} for all \m{i=1,2,3,} and \m{k=1,2,\ldots,N-1.} On the other hand, if the slack variable \m{\mtms{k}^i > 0}, then the momentum variable \m{\mtm{k}^i} corresponding to the slack variable \m{\mtms{k}^i} lies on the boundary (i.e., \m{\mtm{k}^i = \pm \mtmb^i})~\cite{boyd}.
	
    The original Newton's root finding algorithm cannot handle momentum inequality constraints; we propose a modified strategy for handling the momentum inequality constraints. The algorithm has three phases: 
    \begin{enumerate}
	\item \textbf{Projection of the states to the feasible region}: After each Newton's iteration, the momentum variables \m{\{\mtm{k}\}^{N}_{k=0}} are projected onto the feasible domain. 
    \item \textbf{Identification of active constraints:} Active constraints can be identified by computing \m{\beta^i_k} from \eqref{eq:sys}. If \m{\beta^i_k>0}, then the inequality constraint corresponding to the state \m{\mtm{k}^i} is active; else it is inactive.  
    \item \textbf{Enforcing active constraints:} When \m{\beta_k^i > 0}, then the equation with \m{\beta_k^i} in \eqref{eq:sys} is trivially satisfied. So, the matching condition \m{\Xi_k^i} will be replaced with the corresponding active constraint \m{\abs{\mtm{k}^i} = \mtmb^i} for the next iteration. 
	\end{enumerate}
    
     Let 
    \[ X \defas \begin{pmatrix}\mtm{0}^\top & \mcom{0}^\top &\ldots &\mtm{k}^\top &\mcom{k}^\top &\ldots &\mtm{N}^\top &\mcom{N}^\top &\mcor{0}^\top \end{pmatrix}^\top \in \R^{6N+9}\] 
    be a vector in the augmented space; the feasible region \m{\Lambda} is defined as
    \[\Lambda \defas \left\{ X \in \R^{6N+9}\; \left| \; \abs{\mtm{k}^i} \leq \mtmb^i \right.\right\}.\] 
    It is clear that \m{\Lambda } is the intersection of the two half spaces \m{X^{+} \defas \left\{ X \left|\; \mtm{k}^i \leq \mtmb^i \right.\right\}} and \m{X^{-} \defas \left\{ X \left|\; \mtm{k}^i \geq -\mtmb^i \right.\right\}}, hence a convex set. Let \m{\mathcal{P}_{\Lambda} : \R^{6N +9} \rightarrow \Lambda} be the projection onto the feasible region \m{\Lambda},  and \m{\tilde{\mathcal{M}}} represents the active constraints, which is a collection of equality state constraints, active inequality state constraints and co-state constraints corresponding to inactive inequality state constraints. Our modified multiple shooting method is described in Algorithm \ref{mmsm}, and is illustrated with the help of the flowchart given in Figure \ref{fig:mnewton}.
\begin{spacing}{1}
\begin{algorithm}
\caption{Modified multiple shooting method}\label{mmsm}
\begin{algorithmic}[1]
\Procedure{Root of the function \m{\tilde{\mathcal{M}}} such that \m{X \in \Lambda}}{}
\State Choose appropriate initial guess \m{X} and the error tolerance bound \m{\delta>0}.
\State \textbf{Projection:}
\State Compute \m{X \mapsto \mathcal{P}_{\Lambda}(X) \defas \begin{pmatrix}\pmtm{0}^\top & \mcom{0}^\top &\ldots &\pmtm{k}^\top &\mcom{k}^\top &\ldots &\pmtm{N}^\top &\mcom{N}^\top &\mcor{0}^\top \end{pmatrix}^\top } \phantom{spac} where \m{\pmtm{k}^i = \sgn(\mtm{k}^i)\min\left\{\mtmb^i, \abs{\mtm{k}^i}\right\}}.
\State Using \eqref{eq:sys}, update co-states 
\begin{align*}
\pmcom{k-1}^i = \begin{cases}-\ocltr{k}^i \quad \text{\;if\;} \abs{\ocltr{k}^i}<\cltb^i,\\ \mcom{k-1}^i \quad \text{elsewhere},\end{cases}\quad  \text{where}\;
\ocltr{k} =\frac{\left(\pmtm{k}-\rot{k-1}^\top\pmtm{k-1}\right)}{h}.
\end{align*}
\If {\m{\pmtm{k}^i = \mtmb^i}}
\State Compute the slack variable using \eqref{eq:sys} as
\begin{align*} 
\phantom{space}\beta_{k}^i = \frac{\left(Z(\pmtm{k})\Xi_{k-1}(\pmtm{k},\pmcom{k}, \pmcom{k-1}, \mcor{0})\right)^i}{\pmtm{k}^i}, \text{\;\;where\;\;} Z(\mtm{k}) =\left(\frac{\rot{k}}{h}- \mathcal{N}_{k} \widehat{\rot{k}^\top\mtm{k}}\right).
\end{align*} 
\EndIf
\State Define active constraints 
\begin{align*}
\tilde{\mathcal{M}}(X) & \defas \begin{pmatrix} \Sigma_1 &\tilde{\Xi}_1 & \ldots & \Sigma_k & \tilde{\Xi}_k & \ldots & \Sigma_{N} & \tilde{\Xi}_{N} & \Mtmcnst^\top & \Orntcnst^\top \end{pmatrix}^\top, \\
\text{where} & \quad \\
\tilde{\Xi}_k^i &\defas \begin{cases} \abs{\mtm{k}^i} - \mtmb^i \quad \text{if\;} \mtms{k}^i > 0, \\ \Xi_k^i \qquad \qquad \text{otherwise}.\nsp\end{cases}
\end{align*}
\State Evaluate the matching conditions \m{\tilde{\mathcal{M}}(\tilde{X})} and its gradient \m{\der{X}\tilde{\mathcal{M}}(\tilde{X})}, \phantom{spaci}where \m{\tilde{X} \defas \begin{pmatrix}\pmtm{0}^\top & \pmcom{0}^\top &\ldots &\pmtm{k}^\top &\pmcom{k}^\top &\ldots &\pmtm{N}^\top &\pmcom{N}^\top &\mcor{0}^\top \end{pmatrix}^\top}.
\If {\m{\norm{\tilde{\mathcal{M}}(\tilde{X})}_{\infty} < \delta}} Stop.  
\EndIf
\State Calculate \m{\Delta \tilde{X}} by solving the linear system \m{\der{X}\tilde{\mathcal{M}}(\tilde{X})\Delta \tilde{X} = - \tilde{\mathcal{M}}(\tilde{X})}. 
\State Update the value of \m{X=\tilde{X}+\Delta \tilde{X}.} 
\State \textbf{goto} \textbf{Projection}.
\EndProcedure
\end{algorithmic}
\end{algorithm}
\end{spacing}

\begin{figure}
\centering
\begin{tikzpicture}[scale=0.5,node distance = 2cm, auto]
\tikzstyle{decision} = [diamond, draw, fill=blue!20, 
    text width=5em, text badly centered, node distance=3cm, inner sep=0pt]
\tikzstyle{sblock} = [rectangle, draw, fill=blue!20, 
    text width=12em, text centered, rounded corners, minimum height=3em]
\tikzstyle{block} = [rectangle, draw, fill=blue!20, 
    text width=17em, text centered, rounded corners, minimum height=3em]
\tikzstyle{line} = [draw, -latex']
\tikzstyle{cloud} = [draw, ellipse,fill=red!20, node distance=3cm,
    minimum height=2em]    
    \node [block] (init) {Initialize \m{X} and the Error Tol \m{\delta>0}};
    \node [cloud, above of=init,node distance=1.3cm] (Start) {Start};
    \node [block, below of=init, node distance=1.5cm] (project){Evaluate \m{\mathcal{P}_{\Lambda}(X), \pmcom{k}, \mtms{k}}};
    \node [block, below of=project, node distance=1.5cm](updtcst){Define active constraints \m{\tilde{\mathcal{M}}} and \m{\tilde{X}}};
    \node [block, below of=updtcst, node distance=1.7cm] (evaluate) {Evaluate the matching conditions \m{\tilde{\mathcal{M}}(\tilde{X})} and its gradient \m{\der{X}\tilde{\mathcal{M}}(\tilde{X})}};
    \node [sblock, left of=project, node distance=6.2cm] (update) {update \m{X=\tilde{X}+\Delta \tilde{X}}};
    \node [sblock, left of= evaluate, node distance=6.2cm] (solve) {Evaluate \m{\Delta \tilde{X}} by solving the Linear System: \\ \m{\der{X}\tilde{\mathcal{M}} (\tilde{X})\Delta \tilde{X} = - \tilde{\mathcal{M}}(\tilde{X})}};
    \node [decision, below of=evaluate,node distance=2.9cm] (decide) {Is \m{\norm{\tilde{\mathcal{M}}(\tilde{X})}_{\infty}\leq \delta} ? };
    \node [cloud, below of=decide, node distance=2.5cm] (stop) {Stop};
    \path [line] (init) -- (project);
    \path [line] (project) -- (updtcst);
    \path [line] (updtcst) -- (evaluate);   
    \path [line] (evaluate) -- (decide);  
    \path [line] (decide) -| node [near start] {No} (solve);
    \path [line] (solve) -- (update);
    \path [line] (update) |- (project);
    \path [line] (decide) -- node {Yes}(stop);
    \path [line] (Start) -- (init);    
\end{tikzpicture}
\caption{Modified multiple shooting algorithm for problems with state constraints}
\label{fig:mnewton}
\end{figure}

    \section{Numerical Experiments}\label{s:nsr}
    The following data have been considered for the simulations: Principal Moment of inertia of the satellite \m{\left(\ii{x},\ii{y},\ii{z}\right)}=\unit{\left(800,1200,1000\right)}{\kilogram\squaren\meter}, Sampling time ($T$) = \unit{0.1}{\second}, Range of angles ($\theta$) can vary between [\unit{10}{\degree}, \unit{90}{\degree}] about any axis, Maximum torque or control bound ($\cltb$) = \unit{\left(20,20,20\right)}{\newton\meter}, maximum momentum ($\mtmb$) = \unit{\left(70,70,70\right)}{\newton\meter\second}, time duration ($t_{\max}$) can range from \unit{5}{\second} to \unit{30}{\second}. 
    
    	An open loop optimal control profile is obtained for a maneouvre of reorienting the satellite for \m{\unit{90}{\degree}} about  axis \m{\left(\frac{1}{\sqrt{3}},\frac{1}{\sqrt{3}},\frac{1}{\sqrt{3}}\right)} from initial momentum $\mtm{i}$ = \unit{\left(30,-10,10\right)}{\newton\meter\second} to desired momentum  $\mtm{f}$ = \unit{\left(0,0,0\right)}{\newton\meter\second} in  \unit{19}{\second}. The optimal control profile along with the momentum and co-state vectors is shown in Figure~\ref{fig:CRot90_111_19}.
    
\begin{figure}[H]
\centering  
\includegraphics[width=0.8\textwidth]{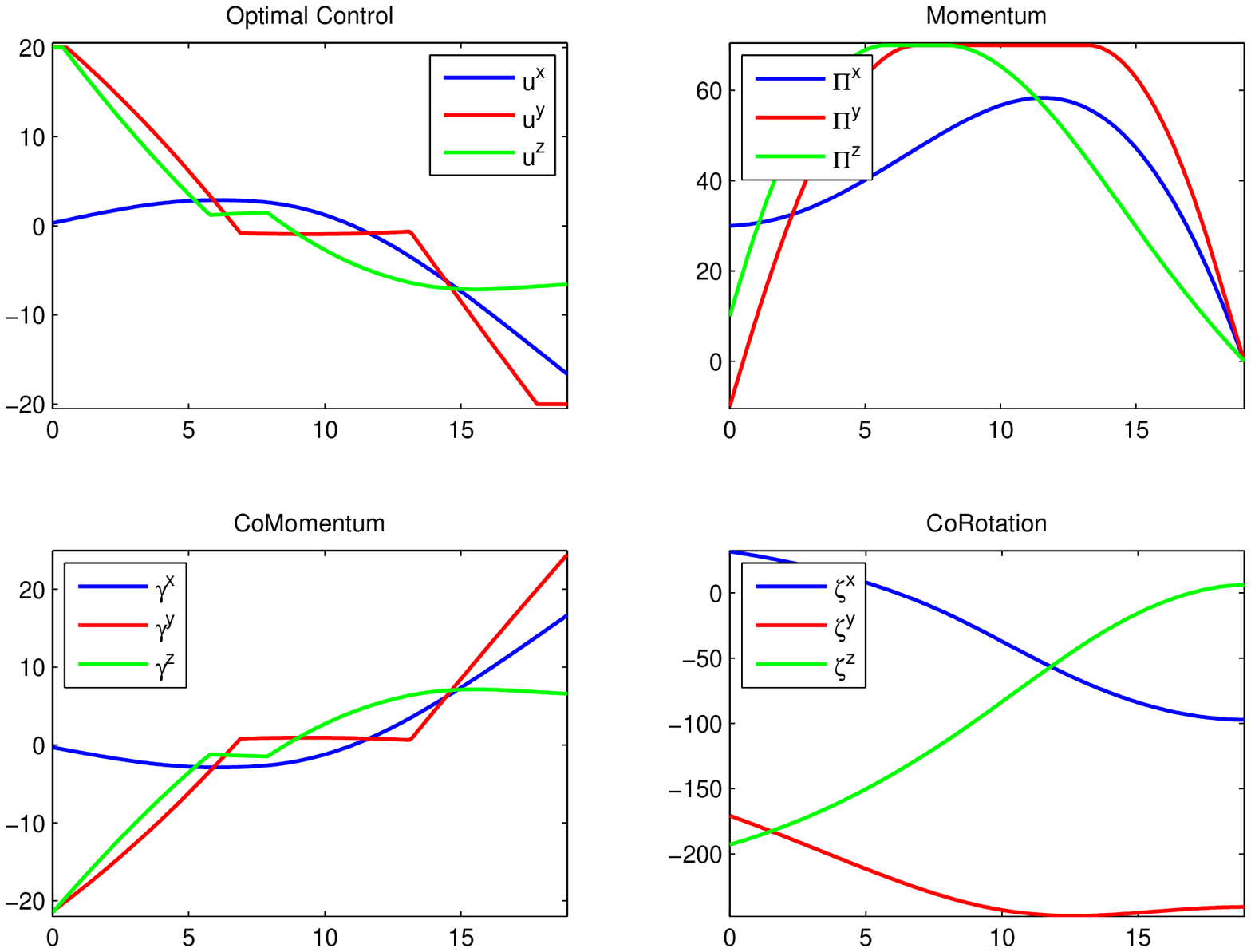}
\caption{Rotation of \m{\unit{90}{\degree}} about \m{\left(\frac{1}{\sqrt{3}},\frac{1}{\sqrt{3}},\frac{1}{\sqrt{3}}\right)} axis with initial momentum $\mtm{i}$ = \unit{\left(30,-10,10\right)}{\newton\meter\second}, desired momentum  $\mtm{f}$ = \unit{\left(0,0,0\right)}{\newton\meter\second}, time duration = \unit{19}{\second}.}
\label{fig:CRot90_111_19}
\end{figure}

It is clear from the Figure~\ref{fig:CRot90_111_19} that the control saturates when the absolute values of the co-state vectors corresponding to the momentum are more than the control bounds. If the momentum saturates, then the control along that axis goes to zero.  
    
    \section{Conclusion and Future Direction}\label{s:cfd}
A tractable solution to a class of optimal control of spacecraft attitude maneuvers under control and momentum constraints has been presented. We used discrete mechanics to discretize the continuous time model, which has many advantages over the conventional discretization schemes like Euler's steps. In particular, the model so obtained preserves conserved quantities of the body like momentum and energy. It is efficient in terms of numerics because the model reduces to the momentum dynamics only. A new multiple shooting algorithm has been proposed to solve the system of difference equations obtained as first order necessary conditions using variational analysis. This algorithm can be used to solve optimal control problems with state inequality constraints. In future, convergence results for the modified multiple shooting methods will be explored. The proposed multiple shooting method uses Newton's root finding algorithm at each step for finding the roots of the matching conditions. Newtons's method has a quadratic rate of convergence in a neighborhood of the true solution, and therefore it is worth exploring the convergence rates of our algorithm in a neighborhood of the matching conditions. It is important to note that our algorithm can handle box constraints only; an extension to more general constraints is under development.    
        
    \appendix

\section{Proof of Lemma~\ref{lemma:invFkJd} } \label{app:invFkJd}
\invFkJd*
\begin{proof}
If \m{\trace\left(\rot{k}\mint\right) = 0} then the matrix \m{\mtrt{k}} is invertible. If \m{\trace\left(\rot{k}\mint\right) \neq 0} then \m{\mtrt{k}} is invertible if and only if the matrix \m{\left(\I - \frac{\rot{k} \mint}{\trace\left(\rot{k}\mint\right)}\right)} is invertible. Using Banach lemma ~\cite[p.\ 193]{lax}, the matrix \m{\left(\I - \frac{\rot{k} \mint}{\trace\left(\rot{k}\mint\right)}\right)} is invertible if \m{\norm{\frac{\rot{k} \mint}{\trace\left(\rot{k}\mint\right)}} < 1.}
\[ \norm{\frac{\rot{k} \mint}{\trace\left(\rot{k}\mint\right)}} = \frac{\norm{\rot{k} \mint}}{\abs{\trace\left(\rot{k}\mint\right)}} \leq \frac{\norm{\mint}}{\abs{\trace\left(\rot{k}\mint\right)}}\]
Then \m{\norm{\frac{\rot{k} \mint}{\trace\left(\rot{k}\mint\right)}} < 1 \quad \text{holds if} \quad \norm{\mint} < \abs{\trace\left(\rot{k}\mint\right)}}.
Now we show that if \m{\cos \left(\frac{\norm{\xi_k}}{2}\right) < \sqrt{\frac{2 d_3 + d_2 -d_1}{2(d_3 + d_2)}}} then condition \m{\norm{\mint} < \abs{\trace\left(\rot{k}\mint\right)}} is true.
Let us choose the quaternions \m{\left(q_0, q_1, q_2, q_3\right) \in \R^4} as a parametrization for the rotation matrix \m{\rot{k}} where \m{q_0 = \cos\left(\frac{\norm{\xi_k}}{2}\right)} and \m{\left(q_1, q_2, q_3 \right)^\top = \xi_k^e \sin\left(\frac{\norm{\xi_k}}{2}\right)}, the vector \m{\xi_k^e } represent the unit vector corresponding to vector \m{\xi_k.} Then
\begin{align} \nonumber
\abs{\trace\left(\rot{k}\mint\right)} &= \abs{d_1 (q_0^2 + q_1^2 -q_2^2 -q_3^2) + d_2 (q_0^2 - q_1^2 + q_2^2 -q_3^2) + d_3 (q_0^2 - q_1^2 -q_2^2 + q_3^2) } \\\nonumber
&\geq d_1 (q_0^2 + q_1^2 -q_2^2 -q_3^2) + d_2 (q_0^2 - q_1^2 + q_2^2 -q_3^2) + d_3 (q_0^2 - q_1^2 -q_2^2 + q_3^2) \\\nonumber
& = d_1 + d_2 + d_3 -2 \left\{ (d_1 + d_2) q_1^2 + (d_2 + d_3) q_2^2 + (d_3 + d_4) q_3^2 \right\} \\
& \geq d_1 + d_2 + d_3 -2 \tilde{f}, \label{eq:trFkJd}
\end{align}
where 
\begin{align}\label{eq:optmz}
\tilde{f} = &\maximize_{x,y,z} \quad (d_1 + d_2) x + (d_2 + d_3) y + (d_3 + d_4) z \\ \nonumber
 \text{subject to}&  \\ \nonumber
 &  x+y+z = 1-q_0^2, \\ \nonumber
 &  x \geq 0, y \geq 0, z \geq 0.
\end{align}
The optimization problem defined in \eqref{eq:optmz} is a linear programming problem and hence the optimum value will be attained on the vertices of the feasible region. Without loss of generality, assume that \m{0 < d_1 \leq d_2 \leq d_3.} Then \m{\tilde{f} = (d_3 + d_2)(1-q_0^2).}
Given that \m{\cos \left(\frac{\norm{\xi_k}}{2}\right) < \sqrt{\frac{2 d_3 + d_2 -d_1}{2(d_3 + d_2)}}} and \m{q_0 = \cos\left(\frac{\norm{\xi_k}}{2}\right)}, one can conclude that \m{q_0^2 < \frac{2 d_3 + d_2 -d_1}{2(d_3 + d_2)}.} Substituting the value of \m{\tilde{f}} to \eqref{eq:trFkJd} gives
\[\abs{\trace\left(\rot{k}\mint\right)} \geq d_1 + d_2 + d_3 -2 (d_3 + d_2)(1-q_0^2) > d_3 = \norm{\mint}. \]
\end{proof}
\section{Proof of Claim~\ref{claim:fskew} } \label{app:fskew}
\fskew*
\begin{proof}
This claim can be proved using the Rodrigues's formula. Let \m{A \defas\ornt{k}^\top \ornt{k+1} \in \SO{3}} and \m{B \defas \rot{k} \in \SO{3}} then there exist vectors \m{a, b \in \R^3} such that 
\begin{align}\label{eq:exp}
A=\e^{(\norm{a} \widehat{a}_e)} \quad \text{and} \quad B=\e^{(\norm{b} \widehat{b}_e)}
\end{align}
where \m{x_e} is a unit vector corresponding to the vector \m{x \in \R^3}. Using the Rodrigues's formula ~\cite{blochsymmetric}
\[  A=\e^{(\norm{a} \widehat{a}_e)} \defas \I + \sin(\norm{a})\widehat{a}_e + \widehat{a}_e \widehat{a}_e^\top (\cos(\norm{a})-1) \]
we obtain, the skew-symmetric parts of \m{A} and \m{B} as
\begin{align}\label{eq:skewab}
S(A) \defas \frac{A-A^\top}{2} = \sin(\norm{a})\widehat{a}_e \quad \text{and} \quad S(B)\defas\frac{B-B^\top}{2} = \sin(\norm{b})\widehat{b}_e .
\end{align}
Given that \m{S(A) = S(B)} and since we know that \m{\so{3} \iso \R^3}, from \eqref{eq:skewab} we have
\begin{align}\label{eq:veceq}
 \sin(\norm{a})a_e = \sin(\norm{b})b_e 
\end{align}
\underline{\textbf{case 1:}}\\
If \m{\norm{b} = 0} then from \eqref{eq:veceq} \m{\sin(\norm{a})a_e = 0 } iff \m{a=0} which means \m{a=b} because \m{\norm{a} \leq \frac{\pi}{2}.} Using the exponential map \eqref{eq:exp} one can conclude that \m{A=B.}\\
\underline{\textbf{case 2:}}\\
If \m{\norm{b} \neq 0} then  from \eqref{eq:veceq} 
\[ \frac{\sin(\norm{a})}{\sin(\norm{b})}a_e = b_e\]
which is true iff 
\begin{align}\label{eq:unit}
& b_e=a_e \quad \text{and} \quad \\ 
& \frac{\sin(\norm{a})}{\sin(\norm{b})} = \pm 1.\label{eq:sin}
\end{align}
The map \m{\sin : ] -\frac{\pi}{2}, \frac{\pi}{2} [ \rightarrow ]-1,1[} is bijective. Given that \m{\norm{b} < \frac{\pi}{2}}, we conclude from \eqref{eq:sin} that \m{\sin(\norm{a}) = \sin(\norm{b})}. Hence 
\begin{align}\label{eq:norm}
\norm{b}= \norm{a}
\end{align}
From \eqref{eq:unit} and \eqref{eq:norm} it is clear that \m{a = b}. Using the exponential map \eqref{eq:exp}, one can conclude that \m{A=B.}
\end{proof}

\section{Proof of Theorem~\ref{thm:nsgrad}}\label{app:nsgrad}
Let us discuss a case in which, only momentum dynamics is considered . In this case, matching conditions \eqref{eq:matching} modifies to
\begin{align}\label{eq:mmatching}
\mathcal{C}(Y) \defas 
\begin{pmatrix}
\Sigma_1 & \tilde{\Xi}_1 & \ldots& \Sigma_k& \tilde{\Xi}_k&  \ldots& \Sigma_{N}& \tilde{\Xi}_{N}&
\Mtmcnst^\top \end{pmatrix}^\top = 0,
\end{align}
where 
\begin{align*}
&Y \defas \begin{pmatrix}\mtm{0}^\top & \mcom{0}^\top &\ldots \mtm{k}^\top &\mcom{k}^\top &\ldots &\mtm{N}^\top &\mcom{N}^\top \end{pmatrix}^\top,\;
\Sigma_{k+1} \defas \left(\mtm{k+1}-\rot{k}^\top\mtm{k} - h \ocltr{k}\right)^\top, \\
& \tilde{\Xi}_{k+1} \defas \left(-\mcom{k} + \left(\rot{k+1}-h\mathcal{N}_{k+1} \widehat{\rot{k+1}^\top \mtm{k+1}}\right) \mcom{k+1} \right)^\top , \;
\Mtmcnst \defas \begin{pmatrix} \mtm{N} - \mtm{f} \\ \mtm{0} - \mtm{i} \end{pmatrix}. 
\end{align*}
We define the gradient of the matching conditions \eqref{eq:mmatching} as
\begin{align}
\der{Y}\mathcal{C}(Y) \defas 
\begin{pmatrix} -\mathcal{J}_1 & \mathcal{K}_1 & 0 & \ldots & 0& 0 & 0\\
         0 &-\mathcal{J}_2 & \mathcal{K}_2 & \ldots & 0& 0 & 0\\        
        \vdots&\vdots&\vdots&\vdots&\vdots&\vdots&\vdots\\
        0 & 0 & 0 & \ldots & -\mathcal{J}_{N-1} & \mathcal{K}_{N-1} & 0 \\
        0 & 0 & 0 & \ldots & 0 & -\mathcal{J}_{N} & \mathcal{K}_{N} \\
        \mathcal{B}^i & 0 & 0 & \ldots & 0 & 0 & \mathcal{B}^f       
\end{pmatrix},
\end{align}
where
\[\mathcal{J}_k \defas \begin{pmatrix}-\der{\mtm{k-1}}\Sigma_k^\top & -\der{\mcom{k-1}}\Sigma_k^\top \\ 0 & \I \end{pmatrix}, \quad
\mathcal{K}_k \defas \begin{pmatrix} \I & 0 \\\der{\mtm{k}}\tilde{\Xi}_k^\top &\der{\mcom{k}}\tilde{\Xi}_k^\top \end{pmatrix},\]
\[\mathfrak{B}^i \defas \der{\mtm{0}}\Mtmcnst =  \begin{pmatrix} 0 \!&\! 0 \\ \I \!&\! 0 \end{pmatrix}, \quad
\mathfrak{B}^f \defas \der{\mtm{N}}\Mtmcnst = \begin{pmatrix} \I \!&\! 0 \\ 0 \!&\! 0 \end{pmatrix}.\]
\begin{restatable}{theorem}{nsgrad}
\label{thm:nsgrad}
The Jacobian matrix \m{\der{Y}\mathcal{C}} is non-singular if there exist a time instant \m{k} such that the control \m{\ocltr{k}} is not saturated. 
\end{restatable}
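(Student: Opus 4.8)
The plan is to establish non-singularity by showing that the kernel of $\der{Y}\mathcal{C}$ is trivial. Write a putative kernel vector conformably with $Y$ as $w=\begin{pmatrix}\mu_0^\top & \nu_0^\top & \cdots & \mu_N^\top & \nu_N^\top\end{pmatrix}^\top$, with $\mu_k,\nu_k\in\R^3$ the increments paired with $\mtm{k}$ and $\mcom{k}$. The last (border) block row, built from $\mathfrak{B}^i$ and $\mathfrak{B}^f$, reads $\begin{pmatrix}\mu_N & \mu_0\end{pmatrix}^\top=0$, so immediately $\mu_0=\mu_N=0$. Each interior block row $\begin{pmatrix}-\mathcal{J}_k & \mathcal{K}_k\end{pmatrix}$ splits, thanks to the triangular shape of $\mathcal{J}_k$ and $\mathcal{K}_k$, into a ``$\Sigma$'' equation $\der{\mtm{k-1}}\Sigma_k^\top\mu_{k-1}+\der{\mcom{k-1}}\Sigma_k^\top\nu_{k-1}+\mu_k=0$ that carries the momentum increment $\mu$ \emph{forward}, and a ``$\tilde{\Xi}$'' equation $-\nu_{k-1}+\der{\mtm{k}}\tilde{\Xi}_k^\top\mu_k+\der{\mcom{k}}\tilde{\Xi}_k^\top\nu_k=0$ that carries the co-momentum increment $\nu$ \emph{backward}.

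The decisive observation is where the control enters. The only coupling between the $\mu$ and $\nu$ blocks inside the $\Sigma$ equation is the term $\der{\mcom{k-1}}\Sigma_k^\top=hD_{k-1}$, where, by the generalized gradient of $\ocltr{k-1}$ recorded before the statement, $D_{k-1}$ is the diagonal matrix carrying $1$ in the unsaturated components of $\ocltr{k-1}$ and $0$ in the saturated ones. Thus at a step where $\ocltr{k-1}$ is fully saturated one has $D_{k-1}=0$ and the two recursions decouple completely.

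Assuming the diagonal blocks $P_{k-1}\defas\der{\mtm{k-1}}\Sigma_k^\top$ and $R_k\defas\der{\mcom{k}}\tilde{\Xi}_k^\top=\left(\rot{k}-h\mathcal{N}_k\widehat{\rot{k}^\top\mtm{k}}\right)^\top$ are invertible --- which holds for the step lengths $h$ admitted in Lemma~\ref{lemma:invFkJd} and Claim~\ref{claim:fskew}, since then $P_{k-1}$ and $R_k$ are small perturbations of $-\I$ and $\rot{k}^\top$ --- I would solve each interior row for $\begin{pmatrix}\mu_k & \nu_k\end{pmatrix}^\top$ in terms of $\begin{pmatrix}\mu_{k-1} & \nu_{k-1}\end{pmatrix}^\top$, obtaining an invertible transition $T_k$, and compose $\Phi\defas T_N\cdots T_1$. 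Imposing $\mu_0=\mu_N=0$ then collapses the whole kernel condition to $\Phi_{12}\,\nu_0=0$, where $\Phi_{12}$ is the upper-right $3\times3$ block of $\Phi$. Hence $\der{Y}\mathcal{C}$ is non-singular precisely when $\Phi_{12}$ is invertible.

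It remains to read off the effect of non-saturation on $\Phi_{12}$, and this is the crux. A saturated step makes $T_k$ block lower triangular (its $(1,2)$ block is $-hD_{k-1}=0$), so a product of purely saturated steps has $\Phi_{12}=0$: this recovers the singular ``all-saturated'' case and shows the hypothesis is necessary. If instead $\ocltr{m}$ is unsaturated, the block $-hD_m$ is injected into the $(1,2)$ slot of $T_{m+1}$ and, conjugated by the invertible triangular products of diagonal blocks before and after $m$, contributes a term $-h\,A_{11}D_mC_{22}$ to $\Phi_{12}$ with $A_{11},C_{22}$ invertible; when $D_m=\I$ this single contribution is already invertible. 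The main obstacle is twofold: first, the clean invertibility of $P_{k-1}$ and $R_k$ beyond the small-$h$ regime; and second, ruling out cancellation of these injected contributions when several steps are unsaturated. I expect the second point to be the genuine work, and would resolve it through the adjoint (discrete symplectic) relationship between the momentum transition $-P_{k-1}$ and the co-momentum transition $R_k$ --- both governed by the same rigid-body rotation to leading order --- which recasts $\Phi_{12}$, up to an invertible conjugation, as a sum $\sum_m W_mD_mW_m^\top$ of positive semidefinite terms, hence invertible as soon as one $D_m$ has full rank.
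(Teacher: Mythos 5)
Your proposal is, in substance, the paper's own proof read through the kernel instead of by block elimination: your transition matrices \m{T_k} are exactly the inverses of the paper's blocks \m{\mathcal{J}_k^{-1}\mathcal{K}_k}, the invertibility of \m{P_{k-1}} and \m{R_k} is precisely what Lemma~\ref{lemma:invl} combined with Lemma~\ref{lemma:invFkJd} supplies (indeed the adjoint relation you invoke only ``to leading order'' holds exactly, \m{-P_k = R_k^\top = a_k}, by the paper's computation of \m{\der{\mtm{k}}\Sigma_{k+1}^\top} in \eqref{eq:gi}), and invertibility of your \m{\Phi_{12}} is equivalent to invertibility of the paper's corner block \m{\mathcal{X}}. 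The congruence sum \m{\sum_m W_m D_m W_m^\top} that you flag as ``the genuine work'' is exactly the paper's matrix \m{s = -r_0^{-1}\left(\sum_{i=0}^{N-1} r_{i+1}\, b_i\, r_{i+1}^\top\right)} with \m{b_i = -hD_i}, which the paper derives by the same first-order-in-\m{h} reduction (dropping the \m{\der{\mtm{k}}\tilde{\Xi}_k} coupling terms) that your sketch would also need, and then concludes precisely as you propose: every term is semidefinite and the unsaturated instant contributes a definite one.
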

\begin{proof}
The matrix \m{\der{Y}\mathcal{C}} is similar to the upper triangular matrix \m{\der{Y}\widetilde{\mathcal{C}}} which is obtained using the elementary row transformations. 
\begin{align}
\der{Y}\widetilde{\mathcal{C}}(Y) \defas 
\begin{pmatrix} -\mathcal{J}_1 & \mathcal{K}_1 & 0 & \ldots & 0& 0 & 0\\
         0 &-\mathcal{J}_2 & \mathcal{K}_2 & \ldots & 0& 0 & 0\\        
        \vdots&\vdots&\vdots&\vdots&\vdots&\vdots&\vdots\\
        0 & 0 & 0 & \ldots & -\mathcal{J}_{N-1} & \mathcal{K}_{N-1} & 0 \\
        0 & 0 & 0 & \ldots & 0 & -\mathcal{J}_{N} & \mathcal{K}_{N} \\
        0 & 0 & 0 & \ldots & 0 & 0 & \mathcal{X}       
\end{pmatrix},
\end{align}
Note that the matrix \m{\der{Y}\widetilde{\mathcal{C}}(Y)} is invertible if and only if \m{\mathcal{J}_k} is invertible for \m{k = 1,2, \ldots, N}, and \m{\mathcal{X}} is invertible.
First we prove that \m{\mathcal{J}_k} is invertible,
\begin{align}
\mathcal{J}_k &
=\begin{pmatrix}-\der{\mtm{k-1}}\Sigma_k^\top & -\der{\mcom{k-1}}\Sigma_k^\top \\ 0 & \I \end{pmatrix} 
= \begin{pmatrix}\left(\rot{k-1}-h\mathcal{N}_{k-1} \widehat{\rot{k-1}^\top\mtm{k-1}}\right)^\top & h\der{\mcom{k-1}}\ocltr{k-1}\\ 0 & \I \end{pmatrix}. \label{eq:gi}
\end{align}
\begin{restatable}{lemma}{invl}
\label{lemma:invl}
The matrix \m{\left(\rot{k}-h\mathcal{N}_k \widehat{\rot{k}^\top\mtm{k}}\right)} is invertible if and only if the matrix \m{\mtrt{k}} is invertible.
\end{restatable}
A proof of Lemma \ref{lemma:invl} is provided in Appendix \ref{app:invl}.
By Lemma~\ref{lemma:invl} and Lemma~\ref{lemma:invFkJd}, we know that the matrix \m{\left(\rot{k}-h\mathcal{N}_{k} \widehat{\rot{k}^\top\mtm{k}}\right)} is invertible. Hence, from \eqref{eq:gi} it is easy to conclude that the matrix \m{\mathcal{J}_k} is invertible.
Let us define the inverse of the matrix \m{\mathcal{J}_k} as 
\[ \mathcal{J}_k^{-1} \defas \begin{pmatrix} a_{k-1}^{-1} & - a_{k-1}^{-1} b_{k-1}  \\ 0 & \I \end{pmatrix}, \quad \mathcal{K}_k \defas \begin{pmatrix} \I & 0  \\ c_k & a_{k}^\top \end{pmatrix},\]
where \[a_k = \left(\rot{k}-h\mathcal{N}_{k} \widehat{\rot{k}^\top\mtm{k}}\right)^\top, \quad b_k = h\der{\mcom{k}}\ocltr{k}, \quad c_k = \der{\mtm{k}}\tilde{\Xi}_k.\]
Taking the first order approximation in \m{h}, the matrix \m{\mathcal{X}} can be approximated as
\begin{align}
\mathcal{X} &= \mathcal{B}^f + \mathcal{B}^i\mathcal{J}_{1}^{-1} \mathcal{K}_{1}\mathcal{J}_{2}^{-1} \mathcal{K}_{2} \ldots \mathcal{J}_{N}^{-1} \mathcal{K}_{N} \nonumber \\
& \approx \begin{pmatrix} \I & 0 \\ * & s \end{pmatrix}, \label{eq:z}
\end{align}
where \m{s = - r_0^{-1}\left(\sum_{i=0}^{N-1} r_{i+1} b_i r_{i+1}^\top \right)} such that \m{r_{i} = a_N a_{N-1}\ldots a_i.}
Now it is clear from \eqref{eq:z} that the matrix \m{\mathcal{X}} is invertible if the matrix \m{s} is invertible.
As the matrix \m{a_i} is invertible, so the matrix \m{r_{i}} is invertible for all {i = 0,1,\ldots,N}. So, it is enough to prove that the matrix \m{\sum_{i=0}^{N-1} r_{i+1} b_i r_{i+1}^\top } is invertible for \m{s} being invertible. Note that the matrix \m{b_i} is a negative semi-definite matrix for \m{i=1,2,\ldots,N-1.} Assuming that there exist a time instant \m{k} such that the control \m{\ocltr{k}} is not saturated means that \m{b_k} is negative definite. So, it is concluded that the matrix \m{\sum_{i=0}^{N-1} r_{i+1} b_i r_{i+1}^\top } is negative definite and hence invertible.  
\end{proof}
\section{Proof of Lemma~\ref{lemma:invl} }\label{app:invl}
\invl*
\begin{proof}
We know that \m{\widehat{\rot{k}^\top h\mtm{k}} = \rot{k}^\top \widehat{h\mtm{k}} \rot{k}}. Then using the implicit equation \m{\widehat{h\mtm{k}}= \rot{k}\mint - \mint \rot{k}^T,} the matrix \m{\left(\rot{k}-h\mathcal{N}_k \widehat{\rot{k}^\top\mtm{k}}\right)} can be rewritten as 
\begin{align}\nonumber
\left(\rot{k}-h\mathcal{N}_k \widehat{\rot{k}^\top\mtm{k}}\right) &= \left(\rot{k}- \mathcal{N}_k \rot{k}^\top \widehat{h\mtm{k}}\rot{k}\right)\\\nonumber
& = \mathcal{N}_k \rot{k}^\top  \left(\rot{k} \left(\mathcal{N}_k\right)^{-1} - \widehat{h\mtm{k}} \right) \rot{k} \\\nonumber
& = \mathcal{N}_k \rot{k}^\top  \left(\trace\left(\mint \rot{k}^\top\right) \I - \mint \rot{k}^\top  - \widehat{h\mtm{k}} \right) \rot{k} \\ \nonumber
& = \mathcal{N}_k \rot{k}^\top  \left(\trace\left(\rot{k} \mint\right) \I - \rot{k} \mint \right) \rot{k} \\
&= \left( \trace\left(\mint \rot{k}^\top\right) \I - \mint \rot{k}^\top\right)^{-1} \mtrt{k} \rot{k}. \label{eq:invHam}
\end{align}
By \eqref{eq:invHam} it is obvious to conclude that the matrix \m{\mtrt{k}} is invertible if and only if the matrix \m{\left(\rot{k}-h\mathcal{N}_k \widehat{\rot{k}^\top\mtm{k}}\right)} is invertible.
\end{proof}

\end{document}